\title{$W_m$-algebras and fractional powers of difference operators}
\author{Gloria Mar\'i Beffa}
\address{Mathematics Department\\ University of Wisconsin\\ Madison WI 53706}
\thanks{The author gratefully acknowledge support through research funding from the College of Letters \& Science at UW-Madison}
\newtheorem{theorem}{Theorem}
\numberwithin{theorem}{section}
\newtheorem{corollary}[theorem]{Corollary}
\newtheorem{proposition}[theorem]{Proposition}
\theoremstyle{definition}
\def\deltaF{\delta_{D}\F}
\def\deltaG{\delta_{D}\G}
\def\RR{\mathbb R}
\def\RP{\mathbb{RP}}
\def\Z{\mathbb Z}
\def\SL{\mathrm {SL}}
\def\GL{\mathrm {GL}}
\def\PSL{\mathrm {PSL}}
\def\tr{\mathrm{Tr}}
\def\T{\mathcal{T}}
\def\DD{\mathcal{D}}
\def\F{\mathcal{F}}
\def\G{\mathcal{G}}
\def\H{\mathcal{H}}
\def\gl{\mathfrak{gl}}
\def\g{\mathfrak{g}}
\def\h{\mathfrak{h}}
\def\a{\mathbf{a}}
\def\d{\mathbf{d}}
\def\q{\mathbf{q}}
\def\0{\mathbf{0}}
\newcommand{\DO}[2]{\mathrm{DO}(#1, #2)}
\newcommand{\PSIDO}[1]{\Psi\mathrm{DO}({#1})}
\newcommand{\IPSIDO}[1]{\mathrm{I}\Psi\mathrm{DO}({#1})}
\newcommand{\IDO}[2]{\mathrm{IDO}(#1, #2)}
\begin{document}
\begin{abstract} In this paper we describe a Poisson pencil associated to the lattice $W_m$-algebras defined in \cite{IM}, and we prove that the Poisson pencil is equal to the one defined in \cite{MW} and \cite{CM} using a type of discrete Drinfel'd-Sokolov reduction. We then show that, much as in the continuous case, a family of Hamiltonians defined by fractional powers of difference operators commute with respect to both structures, defining the kernel of one of them and creating an integrable hierarchy in the Liouville sense.
\end{abstract}
\maketitle
\section{Introduction and Background}

The space of operators of the form $u_0(x) + \ldots + u_{m-2}(x) \partial^{m-2} - \partial^m,
$
where $\partial := \partial / \partial x$ and the coefficients $u_i(x)$ are periodic functions, has a remarkable quadratic Poisson structure, called the second Adler-Gelfand-Dickey bracket \cite{GD, Adler}, and defined by Lax almost 50 years ago. Its Poisson algebra is known as the classical $W_m$-algebra. Adler-Gelfand-Dickey structures are perhaps better known  through its connection to integrable systems as they are Poisson brackets for KdV-type equations \cite{GD}. These equations are biHamiltonian, i.e. Hamiltonian with respect to two compatible Poisson structures whose sum is also Poisson. All best-known integrable systems are biHamiltonian. The Adler-Gelfand-Dickey bracket can be constructed in at least two equivalent ways: as defined by the multiplicative structure on the algebra of formal pseudo-differential operators  interpreted as a Poisson-Lie structure on the extended group of such operators \cite{khesin1995poisson}); or defined through a Drinfeld-Sokolov reduction on the dual of an affine (Kac-Moody) Lie algebra \cite{DS}.

In a recent paper \cite{IM}, the authors defined a discrete version of these constructions, the lattice $W_n$-algebra. This Poisson algebra can also be constructed in two different ways, similarly to the continuous case. One can define a multiplicative structure on the space of monic $m$ order difference operators of the form $(-1)^{m-1} +u^1\T+ \ldots + u^{m-2} \T^{m-1} - \T^m
$, where $u^i$ are bi-infinite $N$-periodic sequences, that is $u^i = (u^i_n)_{n=-\infty}^{+\infty}$ with $u^i_{n+N} = u^i_n$ for any $n$, and where $\T$ acts on the space of bi-infinite periodic sequences $u = (u_n)$ by shifting the subindex once, i.e. $(\T u)_n = u_{n+1}$. The space of any $m$ order difference operators has a natural multiplicative structure that allow us to define a natural Poisson bracket. The authors in \cite{IM} showed that the resulting bracket can be reduced using left and right multiplication by bi-infinite $N$-periodic sequences to a bracket on the space of monic operators with constant zero term, as above, to define what they called the $W_m$-lattice algebra. 

This bracket had an earlier definition through a modified discrete version of the Drinfel'd-Sokolov reduction. In \cite{MW} the authors proved that a bracket introduced by Semenov-Tian-Shansky in \cite{semenov85} could be reduced to a quotient of the form $\PSL(m+1)^N/H^N$ where $\RP^m = \PSL(m+1)/H$ is the homogeneous representation of the projective space, with $H^N$ acting on $\PSL(m+1)^N$ by discrete gauges. The resulting bracket coincides with the reduced bracket on monic operators, as shown in \cite{IM}. The authors also identified a second bracket defined through the same reduction process, but failed to prove that the two brackets were compatible. The authors of \cite{CM} naturally connected the Hamiltonians with respect to the quadratic bracket to invariant evolutions of projective polygons, lifting the two Poisson structures to pre-symplectic forms on the space of projectively invariant polygonal vector fields. They used this connection to show that the two brackets  where compatible and some associated evolutions were biHamiltonian. Through these general constructions one can recover familiar structures that have appeared in the literature as Hamiltonian structures for the lattice Visasoro algebra or Volterra lattice \cite{volkov1988miura, faddeev2016liouville}, and the lattice $W_3$-algebra \cite{belov1993lattice}. \par

In this paper we aim to describe this Hamiltonian pencil using the $W_m$-algebra definition as in \cite{IM}, and we will prove that the companion bracket to the $W_m$-algebra coincides with the companion bracket defined in \cite{MW}. This interpretation will allow us to readily identify a Liouville integrable system defined by Hamiltonians defined by the traces of fractional powers of the difference operators 
\[
\F^s (D) = \sum_n \tr(D^{s/m}),
\]
much like in the continuous case. The proof of the equivalence of both pencils is achieved through the identification of the pre-symplectic forms $\omega_i$, $i=1,2$, that lift both Poisson structures $\{,\}_i$, $i=1,2$, to the space of invariant vector fields on twisted polygons in centro-affine geometry, that is, the case of arbitrary $u^0$ (rather than constant). 

Finally, we will show that if the lift of an $\F$-Hamiltonian evolution with respect to $\{,\}_1$ to a polygonal vector field is denoted by $X^\F$, then $X^\F$ is the Hamiltonian vector field with respect to the pre-symplectic form $\omega_1$, for both centro-affine and projective cases. In particular, in the projective case $X^{\F^s}$ is defined by the nonnegative part of $D^{s/m}$, for any $s$.

The author is deeply grateful to Professor Anton Izosimov for discussions and for his input on the content of section 5. His suggestions and ideas facilitated the results presented in this paper.

\section{A discretization of the Adler-Gelfand-Dikii bracket: Discrete $W_m$-algebras}\label{sec:scalar} 

We denote the space of $N$-periodic upper-triangular difference operators of order $m$ by $\DO{N}{m}$. That is, its elements are of the form
\begin{equation}\label{app:operator}
D =\sum_{i = 0}^m a^i \T^i,
\end{equation}
where $a_i$'s are $N$-periodic functions $\Z \to \RR$ acting on functions of the same kind by term-wise multiplication, while $\T$ is the \textit{left} shift operator $(\T f)(x) = f(x+1) $ (the term \textit{upper-triangular} is used to distinguish such operators from those which may also contain terms of negative power in $\T$). We define an $N$-periodic \textit{pseudodifference operator} as an expression of the form
\begin{equation}\label{app:psido}
\sum_{i = -\infty}^k b^i \T^i,
\end{equation}
where $k\in \Z$, and each $b^i \colon  \Z \to \RR$ is an $N$-periodic function. Such an expression can be regarded either as a formal sum, or as an actual operator acting on the space $\{\xi \colon \Z \to \RR \mid \exists\, j \in \Z : \xi(x) = 0 \, \forall \, x > j\}$ of eventually vanishing functions.\par
We will denote the set of $N$-periodic pseudodifference operators by $\PSIDO{N}$. This set is an associative algebra. Moreover, almost every pseudodifference operator is invertible. In particular, \eqref{app:psido} is invertible if the coefficient $b^k$ of highest power in $T$ is a non-vanishing sequence. 
We will denote the set of invertible $N$-periodic pseudodifference operators by $\IPSIDO{N}$. This is a group with respect to multiplication. At least formally, one can regard it as an infinite-dimensional Lie group. 

The following proposition was proved in \cite{IM}
\begin{proposition}\label{app:mainprop}
There exists a natural Poisson structure $\pi$ on the group $\IPSIDO{N}$ of $N$-periodic invertible pseudodifference operators. This structure has the following properties:
\begin{enumerate}
\item[1.]  It is multiplicative, in the sense that the group multiplication is a Poisson map. In other words, the group $\IPSIDO{N}$, together with the structure $\pi$, is a \textit{Poisson-Lie group}.
\item[2.] The subset $\IDO{N}{k} := \IPSIDO{N} \cap \DO{N}{k}$ of order $k$ invertible upper-triangular difference operators is a Poisson submanifold.
\item[3.]The Poisson structure $\pi$ vanishes on the submanifold  $\IDO{N}{0}$ of invertible order zero operators.
\item[4.]\label{4} The Poisson structure $\pi$ is invariant under an automorphism of $\IPSIDO{N}$ given by conjugation $\mathcal D \to f\mathcal D f^{-1}$ with quasiperiodic $f \colon \Z \to \RR$.
\end{enumerate}
\end{proposition}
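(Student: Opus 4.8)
The plan is to realize $\pi$ as the quadratic $r$-matrix Poisson--Lie structure, in the sense of Semenov-Tian-Shansky, attached to a natural splitting of the associative algebra $\PSIDO{N}$. I would begin by equipping $\PSIDO{N}$ with the trace $\tr\bigl(\sum_i b^i\T^i\bigr):=\sum_{n\ \mathrm{mod}\ N}b^0_n$, which is invariant under conjugation and makes $\langle P,Q\rangle:=\tr(PQ)$ a well-defined (the $\T^0$-coefficient of a product is a finite sum) and nondegenerate bilinear form, with $\langle a\T^i,b\T^j\rangle\ne0$ only when $i+j=0$. Writing $\g$ for $\PSIDO{N}$ with the commutator bracket --- the (formal) Lie algebra of $G:=\IPSIDO{N}$ --- I would use the vector-space splitting $\g=\g_+\oplus\g_-$, where $\g_+$ is the algebra of upper-triangular difference operators (nonnegative powers of $\T$) and $\g_-$ that of strictly negative powers; since the product of two operators of nonnegative (resp.\ strictly negative) order again has nonnegative (resp.\ strictly negative) order, $\g_+$ and $\g_-$ are both associative, hence Lie, subalgebras.

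From this splitting one forms the classical $r$-matrix $R$ (projection onto $\g_+$ minus projection onto $\g_-$, with the customary adjustment on the order-zero terms making it skew-symmetric for $\langle\,,\rangle$). Since $\g_\pm$ are subalgebras, $R$ solves the modified classical Yang--Baxter equation, and the Semenov-Tian-Shansky construction of a quadratic bracket on a group of invertibles then produces a Poisson structure $\pi$ on $G$ whose Hamiltonian vector fields take the Adler--Gelfand--Dickey form
\[
\dot L\;=\;(L\,\grad f)_+\,L\;-\;L\,(\grad f\cdot L)_+ ,
\]
where $(\cdot)_+$ is a suitable projection onto the difference-operator part $\g_+$ (the complementary projection $(\cdot)_-$ landing in operators of order $\le 0$) and $\grad f(L)\in\g$ is the trace-gradient of $f$ at $L$. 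That $\pi$ is a genuine Poisson bracket and that multiplication $G\times G\to G$ is Poisson --- property~(1) --- is built into this construction once the Yang--Baxter identity is in hand, so no further argument is needed there.

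It then remains to establish (2)--(4). For property~(2), write also $\dot L=-(L\,\grad f)_-\,L+L\,(\grad f\cdot L)_-$ (using that $(\cdot)_+$ and $(\cdot)_-$ are complementary): the first presentation exhibits $\dot L$ as a sum of products of difference operators, hence a difference operator, while the second, using that $(\cdot)_-$ produces operators of order $\le 0$, shows $\dot L$ has order at most $\mathrm{ord}(L)$; thus $\dot L\in\DO{N}{k}=T_L\,\IDO{N}{k}$ whenever $L\in\IDO{N}{k}$, so $\IDO{N}{k}$ is a Poisson submanifold. Property~(4) is essentially formal: conjugation $\DD\mapsto f\DD f^{-1}$ by a quasiperiodic $f$ carries $b\T^i$ to $\bigl(b\,f/(\T^i f)\bigr)\T^i$, again an $N$-periodic operator, so it is an algebra automorphism of $\PSIDO{N}$ that preserves the grading by powers of $\T$ --- hence commutes with $(\cdot)_\pm$ and with $R$ --- and fixes $\tr$ (the $\T^0$-coefficient is unchanged, hence so is $\langle\,,\rangle$); therefore it preserves the whole construction of $\pi$. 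Property~(3) is this same observation specialized to $L=a$ of order zero: conjugation by $a$ preserves powers of $\T$, so $(\cdot)_+$ commutes with $\mathrm{Ad}_a$, and writing $\grad f\cdot a=a^{-1}(a\,\grad f)\,a$ gives $L\,(\grad f\cdot L)_+=(L\,\grad f)_+\,L$, so $\dot L=0$ and $\pi$ vanishes on $\IDO{N}{0}$.

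The main obstacle is twofold. On the conceptual side, the genuinely new feature relative to the continuous case is the order-zero ("middle") grading: unlike the residue pairing on pseudodifferential operators, the trace here pairs a degree with its negative rather than off by one, so one must choose the $r$-matrix carefully on $\g_0$ (so that $R+R^*$ is $\mathrm{ad}$-invariant and the quadratic bracket is honestly Poisson), and one must also handle with care the fact that $\IPSIDO{N}$ is only formally a Lie group and that the relevant class of functionals is the one whose trace-gradients lie in $\PSIDO{N}$ --- harmless for functionals of finitely many coefficients, but worth stating precisely. On the computational side, the substance of the statement is property~(2), whose delicate point is the order bound on $\dot L$; this is precisely the discrete counterpart of the classical Adler--Gelfand--Dickey computation that differential operators form a Poisson submanifold of pseudodifferential ones, and it is where the choice of splitting enters essentially.
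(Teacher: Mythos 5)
Your proposal is correct and follows essentially the same route the paper takes (via \cite{IM}): the trace pairing on $\PSIDO{N}$, the $r$-matrix $r(L)=\tfrac12(L_+-L_-)$ coming from the splitting into nonnegative and strictly negative powers of $\T$, the resulting quadratic Poisson--Lie (Semenov-Tian-Shansky) bracket as in \eqref{app:gd}, and the order-count and conjugation arguments for properties 2--4. The subtleties you flag (the treatment of the order-zero graded piece and the order bound giving the Poisson-submanifold property) are exactly the points the paper's construction relies on.
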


The natural Poisson structure defined above appears on any Lie group which is embedded as an open subset into an associative multiplicative algebra $A$ (for example, as its invertible elements).  In that case, the Lie algebra of $G$ (or the tangent space to $G$ at any point) can be naturally identified with $A$. Assume also that $A$ is endowed with an invariant inner product, that is, $(xy,z) = (x,yz)$ for any $x,y,z \in A$ (in particular, the inner product is adjoint invariant). Furthermore, assume that $r \colon A \to A$ is a skew-symmetric operator satisfying the modified Yang-Baxter equation
\begin{equation}\label{app:mybe}
[rx, r y] - r[rx, y] - r[x, r y] = -[x,  y] \quad \forall\, x,y \in \g.
\end{equation}
Then $G$ carries a structure of a factorizable Poisson-Lie group. Identifying the cotangent space $T_g^*G$ with the tangent space $T_g G = A$ by means of the invariant inner product, one can then write the formula for the corresponding Poisson tensor on $G$ as
\begin{equation}\label{app:gd}
\pi_g(x,y) = (r(xg), yg) - (r(gx), gy) \quad \forall\, g \in G, x,y \in A.
\end{equation}
Property 4 in proposition \ref{app:mainprop} allows us to reduce the natural Poisson bracket on $\IDO{N}{m}$ given by operators as in (\ref{app:operator}), to difference operators where $a^m = -1$ and $a^0 = (-1)^{m-1}$, both constant. The reasons for these particular choices will become clear in our next section.

With this construction in mind, consider the inner product on $\PSIDO{N}$ given by
\[
\langle V, W\rangle  = \sum_{n=0}^N \tr(VW)(n)
\]
where if $V = \sum v^r \T^r$, $\tr(V) = v^0$. This inner product is invariant and can be used to define a Poisson bracket on the space of invertible difference operators. Indeed, if
$\F :\IDO N m \to \RR$, its variational derivative is represented by a pseudo-difference operator of order $m$, denoted here by $\deltaF$ and defined uniquely by
\[
\frac d{d\epsilon}|_{\epsilon=0} \F(D(\epsilon)) = \langle \deltaF, \frac d{d\epsilon}|_{\epsilon=0} D(\epsilon)\rangle.
\]
With this notation, the quadratic Poisson bracket above becomes
\begin{equation}\label{firstB}
\{\F,\G\}(D)  = \sum_{n-1}^N \tr\left( r(D \deltaF) D- D r(\deltaF D), \deltaG\right)(n)
\end{equation}
where $r(L) = \frac12(L_+-L_-)$. Notice that we can substitute $r$ by $r^+(L) = L_++\frac 12 L_0$ and obtain the same bracket.

If the bracket is reduced, it will have an identical formula, with $\deltaF$ modified in standard fashion by the corresponding reduction. For more details about this construction, please see \cite{IM}.

\section{A discretization of the Drinfel'd-Sokolov reduction}  

The authors of \cite{MW} defined a pair of Poisson structures associated to the background geometry of projective twisted polygons in $\RP^{m-1}$. The pair was shown to be compatible in a subsequent paper \cite{CM}, where the authors linked them to pre-symplectic forms on the space of projectively-invariant polygonal vector fields. In this section we will briefly recount the definition of the pair, defined through a discrete version of the well-known Drinfel'd-Sokolov reduction \cite{DS}.

As a homogenous space, the projective space $\RP^{m-1}$ can be described as $\PSL(m)/H$, where the subgroup $H$ is the isotropy subgroup of a distinguished point. The projective group $\PSL(m)$ acts on the quotient via left multiplication on class representatives. We say an infinite polygon in $\RP^{m-1}$ is {\it $N$-twisted} is there exists an element of the projective group $M\in \PSL(m)$ called the {\it monodromy}, such that, if $\gamma_n$ is the $n$th vertex, then $\gamma_{n+N} = M\cdot \gamma_n$, for all $n$. We focus on the moduli space of equivalent classes of polygons in $\RP^{m-1}$ under the action of the projective group, and define coordinates for this space. The authors of \cite{MW} proved that an $N$-twisted projective polygon $\gamma$ completely determines the solution of a recursion equation of the form
\begin{equation}\label{invariants}
x_{n+m} = a_n^{m-1} x_{n+m-1} + a_n^{m-2} x_{n+m-2} + \dots+ a_n^1 x_{n+1} + (-1)^{m-1}x_n
\end{equation}
for any $n$, up to the action of the projective group on $\gamma$, whenever $N$ and $m$ are co-prime (the case $m=3$ appeared in \cite{OST}). The solution $x$ is defined by the entries of a unique lift of $\gamma$ to $\RR^m$, more about this in our next section. The discrete functions  $a^k$, $k=1,\dots,m-1$ define bi-infinite $N$-periodic sequences, which are invariant under the projective action of $\PSL(m)$ on the polygon $\gamma$. Therefore, they can be considered to be coordinates in the moduli space. The same description holds if we consider the centroaffine space $\RR^m$ with $\GL(m)$ acting on it linearly. In this case the invariant coordinate $a_n^0$, coefficient of $x_n$, will not be constant.

The following proposition was proven in \cite{MW} for the projective case, below is the $\GL(m)$ case in \cite{IM}. The proofs are identical and we do not include it.
\begin{proposition} The moduli space of non-degenerate twisted polygons under the linear action of $\GL(m)$ can be identified with an open and dense subset of $\GL^N(m)/H^N$, where $H\subset \GL(m)$ is the subgroup 
\(
H = \{ g\in G, ~~g e_1 =  e_1 \}.
\)
$H^N$ acts on $\GL(m)^N$ via the right discrete gauge action 
\begin{equation}\label{gauge}
(h, g) \to ((\T h) g h^{-1})
\end{equation}
with $h\in H^N$ and $g\in \GL(m)^N$ representing a bi-infinite $N$-periodic sequence.
\end{proposition}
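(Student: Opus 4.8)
The plan is to set up the identification explicitly via a discrete moving frame and then verify it is a bijection onto the generic part of the quotient. Given a non-degenerate $N$-twisted polygon $\gamma=(\gamma_n)$ in $\RR^m$ with monodromy $M$ (non-degenerate meaning any $m$ consecutive vertices are linearly independent), I would form the frame matrix $F_n := (\gamma_n\,|\,\gamma_{n+1}\,|\,\cdots\,|\,\gamma_{n+m-1})\in\GL(m)$, set $\rho_n:=F_n^{-1}$, and put $g_n := \rho_{n+1}\rho_n^{-1} = F_{n+1}^{-1}F_n$. Non-degeneracy forces $F_{n+1}=F_nC_n$ for a companion matrix $C_n$ with last column $(a_n^0,\dots,a_n^{m-1})$, so $g_n = C_n^{-1}$; twistedness, $F_{n+N}=MF_n$, makes $g=(g_n)$ an $N$-periodic element of $\GL(m)^N$; and replacing $\gamma_n$ by $A\gamma_n$ replaces $F_n$ by $AF_n$ and leaves $g_n$ unchanged. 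Hence $\gamma\mapsto[g]$ is a well-defined map $\Phi$ from the moduli space to $\GL(m)^N/H^N$.

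For the inverse I would, given an $N$-periodic $g=(g_n)\in\GL(m)^N$, choose $\rho_0\in\GL(m)$ arbitrarily, determine $\rho_n$ for all $n\in\Z$ by $\rho_{n+1}:=g_n\rho_n$, and set $\gamma_n:=\rho_n^{-1}e_1$. A direct computation gives $\rho_{n+N}\rho_n^{-1} = g_{n+N-1}\cdots g_n$ and shows that $\rho_n^{-1}(g_{n+N-1}\cdots g_n)^{-1}\rho_n$ is independent of $n$, so $\gamma$ is $N$-twisted with that matrix as monodromy; and since $F_n = \rho_n^{-1}\,(e_1\,|\,g_n^{-1}e_1\,|\,g_n^{-1}g_{n+1}^{-1}e_1\,|\,\cdots\,|\,g_n^{-1}\cdots g_{n+m-2}^{-1}e_1)$, the reconstructed polygon is non-degenerate exactly when the bracketed matrix $G_n$ is invertible for every $n$, which is an open and dense condition on $g$ cutting out an $H^N$-invariant open dense subset $\mathcal U\subseteq\GL(m)^N$ (note $G_n$ has first column $e_1$, hence lies in $H$). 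Changing $\rho_0$ to $\rho_0B$ changes $\gamma_n$ to $B^{-1}\gamma_n$ (a global motion), while reconstructing the gauged datum $h_{n+1}g_nh_n^{-1}$ starting from $h_0\rho_0$ (with $h\in H^N$) yields $\rho_n\mapsto h_n\rho_n$, hence $\gamma_n\mapsto\rho_n^{-1}h_n^{-1}e_1=\rho_n^{-1}e_1=\gamma_n$ since $h_n$ fixes $e_1$. So reconstruction descends to a map $\Psi$ from $\mathcal U/H^N$ to the moduli space. Finally one checks $\Phi$ and $\Psi$ are mutually inverse: running $\Psi$ with $\rho_0=F_0^{-1}$ recovers $\rho_n=F_n^{-1}$ and returns $F_ne_1=\gamma_n$; conversely $\Phi(\Psi([g]))=[\,G_{n+1}^{-1}g_nG_n\,]=[g]$, the connecting gauge being $h_n=G_n^{-1}\in H^N$. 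This exhibits the moduli space of non-degenerate twisted polygons as $\mathcal U/H^N$, an open dense subset of $\GL(m)^N/H^N$.

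The computations are routine linear algebra; the point requiring genuine care is making the density statement precise — pinning down exactly the non-degeneracy conditions on $g$, verifying that they carve out an open dense $H^N$-saturated subset (so that $\mathcal U/H^N$ is open and dense in the quotient, using openness of the quotient map), and being mindful that the $H^N$-action need not be free everywhere, a datum $g$ having stabilizer the intersection of $H$ with the centralizer of the monodromy $P=g_{N-1}\cdots g_0$, which is trivial for $g$ in generic position but not in general. Since the argument is word-for-word the $\PSL(m)$ case proved in \cite{MW} — there $\gamma_n\in\RP^{m-1}$ must first be lifted to $\RR^m$, and the lifting ambiguity reproduces exactly the $H^N$-gauge — we omit the remaining details.
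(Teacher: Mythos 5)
The paper does not actually prove this proposition --- it defers to \cite{MW} and \cite{IM} with the remark that the proofs are identical --- and your reconstruction via the discrete frame $F_n=(\gamma_n\,|\,\cdots\,|\,\gamma_{n+m-1})$, the transition matrices $g_n=F_{n+1}^{-1}F_n$, and the explicit reconstruction inverse is precisely the standard argument of those references; all the key verifications (gauge invariance of $\gamma_n=\rho_n^{-1}e_1$ since $h_n$ fixes $e_1$, the connecting gauge $h_n=G_n^{-1}\in H^N$, the $H^N$-saturated open dense non-degeneracy locus) check out. The only quibble is your parenthetical stabilizer computation: the stabilizer of $g$ is the subgroup of $H\cap Z(P)$ consisting of those $h_0$ whose conjugates $(g_{n-1}\cdots g_0)h_0(g_{n-1}\cdots g_0)^{-1}$ also lie in $H$ for all $n$, not all of $H\cap Z(P)$ --- but this does not affect the argument.
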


Finally, let  
\[
r = \sum_{i>j} E_{ij}\otimes E_{ji} + \frac 12\sum_r E_{rr}\otimes E_{rr}
\]
be the standard $r$ matrix for $G=\GL(m)$, where $E_{ij}$ has a $1$ in the entry $(i,j)$ and zeroes elsewhere. Given $\F, \H$ smooth scalar-valued functions on $G^N$ and $A\in G^N,$ the \emph{twisted Poisson bracket} is defined as in \cite{FRS}:
\begin{equation}\label{twisted}
\begin{split}
\{\F, \H\}(A) & := \sum_{s=1}^Nr(\nabla_s \F \wedge \nabla_s \H) + \sum_{s=1}^N r(\nabla_s' \F \wedge \nabla_s' \H)  \\ &- \sum_{s=1}^N r\left( (\T\otimes 1)(\nabla_s' \F \otimes \nabla_s \H)\right) + \sum_{s=1}^N r  \left((\T\otimes 1) (\nabla_s' \H \otimes \nabla_s \F)\right), 
\end{split}
\end{equation}
where $\xi\wedge \eta=\tfrac12( \xi\otimes\eta - \eta\otimes\xi)$,  and $\nabla' \F$ and $\nabla \F$ are the right and left gradient, respectively.
Equation~\eqref{twisted} defines a Hamiltonian structure on $G^{N}$, as shown by Semenov-Tian-Shansky in~ \cite{semenov85}.  Moreover, the {\em right gauge action} of $G^{N}$ on itself 
\begin{equation}\label{gauge}
(g, A) \to ((\T g) A g^{-1})
\end{equation}
is a Poisson map and its orbits coincide with the symplectic leaves \cite{FRS, semenov85}.

\begin{theorem}\label{reducedbracket} (\cite{IM})
The Poisson bracket (\ref{twisted}) reduces locally to the quotient $G^N/H^N$, where $H^N$ is acting on $G^N$ via de right gauge action.
\end{theorem}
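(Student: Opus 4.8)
The plan is to deduce Theorem~\ref{reducedbracket} from the standard principle of Poisson reduction by a symmetry group: if a Lie group $K$ acts freely and properly on a Poisson manifold $(P,\{\,,\,\})$ by Poisson diffeomorphisms, then $C^\infty(P)^K$ is a Poisson subalgebra of $C^\infty(P)$ and $P/K$ carries a unique Poisson structure making the projection $p\colon P\to P/K$ a Poisson map. The theorem then splits into two points: (a) the right gauge action \eqref{gauge} of the subgroup $H^N\subset G^N$ on $G^N$ is by Poisson automorphisms of the twisted bracket \eqref{twisted}; and (b) over a suitable open dense locus the $H^N$-action admits local slices, so that the quotient is locally a smooth manifold.

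Point (a) is immediate from the result of Semenov-Tian-Shansky recalled just above the statement: for each fixed $g\in G^N$ the gauge map $A\mapsto(\T g)Ag^{-1}$ preserves \eqref{twisted}, and restricting $g$ to the subgroup $H^N$ shows that $H^N$ acts by Poisson automorphisms. Hence, for $H^N$-invariant $\F,\H$ and any $h\in H^N$ with associated gauge diffeomorphism $\phi_h$, one gets $\phi_h^*\{\F,\H\}=\{\phi_h^*\F,\phi_h^*\H\}=\{\F,\H\}$, so $\{\F,\H\}$ is again $H^N$-invariant and $C^\infty(G^N)^{H^N}$ is a Poisson subalgebra. For point (b) I would invoke the preceding proposition: it identifies the moduli space of non-degenerate twisted polygons with an open dense subset of $G^N/H^N$, and over the corresponding locus in $G^N$ the gauge stabilizer in $H^N$ is trivial — a non-degenerate twisted polygon is uniquely normalized to the companion form of the recursion \eqref{invariants}, using that $N$ and $m$ are coprime — so the action is locally free there and admits local slices $S$ transverse to the $H^N$-orbits. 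On such an $S$ the $H^N$-invariant functions are identified with $C^\infty(S)$; the bracket of two invariant functions, being invariant, is determined by its restriction to $S$, and this defines the reduced Poisson bracket on $S$, i.e. locally on $G^N/H^N$, with Leibniz and Jacobi inherited from \eqref{twisted}. This is exactly the sense of "reduces locally" in the statement.

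The only laborious part — and the one relevant to the later comparison with \eqref{firstB} — is to exhibit the reduced bracket in closed form. For that I would fix the section of $p$ given by matrix sequences already in the companion form of \eqref{invariants} and rewrite the left and right gradients $\nabla_s\F,\nabla_s'\F$ modulo the infinitesimal $H^N$-action so that they are tangent to this slice; the twisted bracket then keeps the shape \eqref{twisted} with the gradients replaced by these representatives, which is the modification of $\deltaF$ "in the standard fashion by the corresponding reduction" alluded to at the end of Section~\ref{sec:scalar}. Carrying out this (Dirac-type) correction is the main obstacle, and it is also where the specific choice of $H$ as the stabilizer of $e_1$ is used. Finally I would emphasize that "locally" is not cosmetic: off the non-degenerate locus the $H^N$-orbits need not be closed or of constant dimension, so a global quotient need not be a manifold, and the argument above genuinely requires restricting to the locus where slices exist.
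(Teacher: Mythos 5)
First, note that the paper does not actually prove this theorem: it is quoted from \cite{IM} (``The proofs and all details can be found in \cite{IM}''), so there is no in-paper argument to compare against. Judged on its own terms, your overall architecture (invariant functions form a Poisson subalgebra; the action is locally free on the non-degenerate locus, so slices exist and the bracket descends) is the right shape, and your points about local freeness, the companion-form normalization under $(N,m)=1$, and the Dirac-type correction of the gradients are all consistent with how the reduced bracket is used later in the paper.

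However, your justification of step (a) contains a genuine gap. The statement that the right gauge action is ``a Poisson map'' (from \cite{semenov85, FRS}) is meant in the sense of Poisson--Lie group actions: the action map $G^N\times G^N\to G^N$ is Poisson when the acting factor carries its own (Sklyanin-type) Poisson--Lie structure. It does \emph{not} follow that each fixed $h$ acts by a Poisson diffeomorphism $\phi_h(A)=(\T h)Ah^{-1}$; that would require, after transporting the gradients, the $\mathrm{Ad}_h\otimes\mathrm{Ad}_h$-invariance of the $r$-matrix, which fails for the standard $r$ and for generic $h\in H$ (already for $N=1$ the gauge action is conjugation, which is not a Poisson automorphism of the Sklyanin bracket). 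Consequently your identity $\phi_h^*\{\F,\H\}=\{\phi_h^*\F,\phi_h^*\H\}$ is unjustified, and the naive ``Poisson reduction by a group of Poisson automorphisms'' template does not apply. The correct mechanism is the reduction theory for Poisson group actions: for a Poisson action one has $\pi(hA)=(\phi_h)_*\pi(A)+(\sigma_A)_*\pi_{H^N}(h)$, where $\sigma_A$ is the orbit map, so the bracket of two $H^N$-invariant functions is again invariant because the correction term is tangent to the orbit and is annihilated by invariant differentials. This in turn requires verifying that $H^N$ is an admissible (Poisson--Lie) subgroup for the twisted bracket --- equivalently, a direct check, using the explicit form of $\mathfrak h=\{X: Xe_1=0\}$ and the gradients, that $\{\F,\H\}$ is $H^N$-invariant whenever $\F,\H$ are. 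That verification is the actual content of the theorem and is where the specific choice of $H$ enters; it is carried out in \cite{IM} (and in \cite{MW} for the projective case) and cannot be waved through as ``immediate.''
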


This theorem will naturally define a Poisson bracket on the open a dense subset of $G^N/H^N$ with $a^i$ defined in (\ref{invariants}) as coordinates. In our next section, we will describe how both brackets, the one defining the $W_m$ algebras, and the one defined by the discrete Drinfel'd-Sokolov reduction, coincide. The proofs and all details can be found in \cite{IM}. 
\section{Connection between both brackets} 

There is a natural connection between evolutions of difference operators of the form (\ref{app:operator}) with $a^m = -1$, and those of projective polygons defined by the kernel of the operators. This relation also exists in many other geometric backgrounds, including centro-affine geometry (the case when $G = \GL(m)$). In this section we remind the reader about this connection for both centro affine and projective cases, and we summarize the results in \cite{IM} that used this connection to prove that both brackets defined in previous sections coincide.

Assume $\gamma\in(\RR^m)^N$ defines a twisted polygon in $\RR^m$, twisted with respect to the linear action of $\GL(m)$ on $\RR^m$. Simply from dimensional reasons, there exist $a^k\in \RR^N$, $k=0,1,\dots,m-1$ such that 
\begin{equation}\label{kernel}
\T^m\gamma = a^{m-1}\T^{m-1}\gamma+a^{m-2}\T^{m-2}\gamma +\dots+a^0 \gamma.
\end{equation}
As proven in \cite{OST} for $m=3$ and in \cite{MW} for any $m$, given a $N$-twisted projective polygon, there exists a unique lift to $\RR^m$ satisfying (\ref{kernel}) with $a^0 = (-1)^{m-1}$, whenever $N$ and $m$ are co-primes. The following theorem was proven in \cite{MW} for the projective case and in \cite{IM} for the centro-affine case. The proof was constructive, showing that the vector field could be obtained explicitly and algebraically from the variation of the Hamiltonian.
\begin{theorem}
Let $a^k(t)$, $k=0,1,2\dots,m-1$ be the coordinates of a solution to an evolution that is Hamiltonian with respect to the reduced bracket defined in subsection \ref{reducedbracket}, with Hamiltonian $f({\bf a})$. Let $D(t)$ be defined by
\begin{equation}\label{Da}
D(t) = - \T^m+a^{m-1}(t) \T^{m-1}+\dots+a^1(t)\T + a^0(t) 
\end{equation}
and let $\gamma(t)$ be the twisted polygon in $\RR^m$ defined by its kernel, $D(\gamma) = 0$. There exists a unique polygonal vector field $X^f$ in an open and dense subset of $(\RR^m)^N$ such that 
\begin{equation}\label{gammaev}
\gamma_t = X^f(\gamma(t)).
\end{equation}
And vice-versa, if $\gamma$ is a solution of (\ref{gammaev}), then the invariants $a^k$ will evolve following an $f$-Hamiltonian evolution with respect to the bracket in theorem \ref{reducedbracket}.
\end{theorem}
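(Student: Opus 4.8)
The plan is to obtain the evolution of the polygon by differentiating the kernel relation and then inverting an explicit, algebraically triangular linear system written in the moving frame; this mirrors the continuous picture, where the eigenfunction of a Lax operator $L$ evolving by $L_t=[P,L]$ moves by $\psi_t=P_+\psi$ (here $X^{\F^s}$ should turn out to be the nonnegative part of $D^{s/m}$ acting on $\gamma$). First I would record the $f$-Hamiltonian evolution of the invariants furnished by Theorem~\ref{reducedbracket} as $\dot a^k=P^k(\mathbf a)$, $k=0,\dots,m-1$, where each $P^k$ is a difference polynomial in the $a^i$, their $\T$-shifts, and the variational derivatives $\delta f/\delta a^i$; locality here is exactly locality of the reduced bracket. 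Differentiating $D(t)\gamma(t)=0$ in $t$ gives $D_t\gamma+D\gamma_t=0$, that is,
\[
D(\gamma_t)=-D_t(\gamma)=-\sum_{k=0}^{m-1}\dot a^k\,\T^k\gamma .
\]
On the open dense set where $\gamma$ is non-degenerate the vectors $\gamma_n,\T\gamma_n,\dots,\T^{m-1}\gamma_n$ form a basis of $\RR^m$ for every $n$, so I would look for $\gamma_t$ in the form $X(\gamma):=\sum_{j=0}^{m-1}q^j\,\T^j\gamma$ with $N$-periodic scalar coefficients $q^j$; requiring the $q^j$ to be functions of the invariants makes $X$ a polygonal (equivariant) vector field.

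Next I would substitute this ansatz into the displayed identity. Applying $D=-\T^m+\sum_{i\le m-1}a^i\T^i$ to $\sum_j q^j\,\T^j\gamma$ produces terms $\T^i\gamma$ with $i$ up to $2m-1$; using the recursion $\T^m\gamma=\sum_{i=0}^{m-1}a^i\,\T^i\gamma$ repeatedly to rewrite every $\T^i\gamma$ with $i\ge m$ in the basis $\{\gamma,\dots,\T^{m-1}\gamma\}$ and matching the coefficient of $\T^j\gamma$ for $j=0,\dots,m-1$ yields a system
\[
\dot a^k=\mathcal L^k_{\mathbf a}\bigl(q^0,\dots,q^{m-1}\bigr),\qquad k=0,\dots,m-1,
\]
where each $\mathcal L^k_{\mathbf a}$ is linear in the $q^j$ and finitely many of their shifts, with coefficients polynomial in the $a^i$ and their shifts. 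Equivalently, passing to the moving frame $\rho=(\gamma,\T\gamma,\dots,\T^{m-1}\gamma)$, for which $\T\rho=\rho N$ with $N$ the companion matrix of $D$, the shift structure of $\rho$ forces any compatible evolution $\rho_t=\rho Q$ to have $Q$ of a fixed ``staircase'' shape whose first column is exactly $(q^0,\dots,q^{m-1})^{T}$; then $\dot N=N(\T Q)-QN$ is automatically supported on the last column, whose entries are the $\dot a^k$.

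The heart of the argument is that, after the normalization of the lift used in the identification of the moduli space with $\GL(m)^N/H^N$ (which removes the residual rescaling freedom, equivalently fixes $\tr Q$), the linear map $q\mapsto(\dot a^0,\dots,\dot a^{m-1})$ just described is invertible on an open dense set, with inverse again difference-polynomial in the $a^i$. Granting this, I would solve the system for the $q^j$ given $\dot a^k=P^k(\mathbf a)$ — this is the explicit, algebraic construction of $X^f:=\sum_j q^j\,\T^j\gamma$ from the variation of $f$ — and conclude $\gamma_t=X^f(\gamma(t))$ by construction, non-degeneracy of $\gamma(t)$ persisting for short time. The converse is the same computation run backwards: if $\gamma$ solves $\gamma_t=X^f(\gamma)$, reconstruct $D(t)$ from the recursion satisfied by $\gamma(t)$ and read off $\dot a^k=\mathcal L^k_{\mathbf a}(q^0,\dots,q^{m-1})=P^k(\mathbf a)$, i.e. the $f$-Hamiltonian flow. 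In the projective case the lift is already canonically normalized by $a^0=(-1)^{m-1}$, so the only extra point is that this constraint is preserved — equivalently that $a^0$ is constant under the reduced bracket — which forces $\dot a^0=0$ and a correspondingly smaller staircase.

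I expect the main obstacle to be precisely the invertibility-and-gauge step: verifying that the staircase system relating $q$ to $\dot{\mathbf a}$ is nondegenerate away from a closed subvariety, and — more delicate — that the gauge section picking out the monic companion operator $D$ coincides with the Drinfel'd--Sokolov gauge section underlying Theorem~\ref{reducedbracket}, so that ``$\dot{\mathbf a}$ computed from $D\gamma_t=-D_t\gamma$'' is literally the reduced Poisson vector field of $f$ and not merely a gauge-equivalent representative. Once the two descriptions of the moduli space are matched in this way, the remainder is the bookkeeping of the recursion and of the reduction carried out in \cite{MW,IM}.
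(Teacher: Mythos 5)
Your outline follows essentially the same constructive route as the cited proof in \cite{MW} and \cite{IM}: differentiate $D\gamma=0$, expand $\gamma_t$ in the moving-frame basis $\rho=(\gamma,\T\gamma,\dots,\T^{m-1}\gamma)$, and invert the resulting linear relation between the frame coefficients $q^j$ and the $\dot a^k$ (equivalently, between the staircase matrix $Q$ with $\rho_t=\rho Q$ and the last column of $\dot A$). The one place where your argument is not yet a proof is the step you yourself flag and then grant: the map $q\mapsto\dot{\mathbf a}$ is not a pointwise matrix but a linear \emph{periodic difference} operator (the equations couple $q^j_n$ to shifts $q^j_{n+i}$), so its invertibility is a monodromy statement about $N$-periodic solutions of a linear recurrence, not a Zariski-open nondegeneracy condition; this is exactly where hypotheses such as the co-primality of $N$ and $m$ (which the paper invokes for the analogous normalization equation $b^{-1}b_m=-1/a^m$) enter, and it is the actual content of the constructive proof. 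The gauge-matching concern you raise is also real but is resolved by the identification, recalled later in the paper, of $Q^X$ with the left and right gradients $\nabla\F$, $\nabla'\F$ of the reduced twisted bracket (e.g.\ $Q_n^Xe_1=\tfrac12(\nabla_{n-1}\F+\nabla_n'\F)e_1$), which shows the two sections of the moduli space coincide.
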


It is worth to briefly describe $X^f$'s connection to the invariants $a^k$, as we will use it in our next section. Let $\rho = (\gamma, \T \gamma, \dots, \T^{m-1}\gamma)$ and assume $\gamma_t = X^f$, where $f(\a)$ is an invariant Hamiltonian function. Then 
\begin{equation}\label{Q}
\rho_t = \rho Q^{X^f}
\end{equation} 
for some invariant matrix $Q^{X^f}$ depending on $a^k$ and their shifts. The matrix $Q^{X^f}$ defines $X^f$ and its shifts and it is directly connected to the group right and left gradients appearing in (\ref{twisted}), as we will see  in our next section where it will be widely used.

Moving now to the $W_m$-algebra picture, one can readily find the $\gamma$ evolutions that are directly linked to evolutions of difference operators which are Hamiltonian with respect to the bracket (\ref{firstB}). If $D(\gamma) = 0$, then $D_t (\gamma) + D(\gamma_t) = 0$ and if $D$ is Hamiltonian with respect to (\ref{firstB}), with Hamiltonian $\F$, then 
\[
D_t = r(D \deltaF) D- D r(\deltaF D)
\]
and so $D(\gamma_t) = -D_t(\gamma) = D r(\deltaF D)(\gamma)$. We call $Y^\F$ the vector field
\[
Y^\F = r(\deltaF D)(\gamma).
\]
\begin{theorem} (\cite{IM}) If $\F(D) = f({\bf a})$ whenever $D$ and $a$ are related as in (\ref{Da}), then
\[
X^f = Y^\F
\]
along $\gamma$. As a corollary, both brackets (\ref{firstB}) and the one in theorem \ref{reducedbracket} coincide when defined on the coordinates ${\bf a}= (a^k)$.
\end{theorem}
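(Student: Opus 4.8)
The plan is to compare the two polygonal vector fields by computing, for each of them, the invariant matrix relating it to the moving frame $\rho = (\gam, \T\gam, \dots, \T^{m-1}\gam)$, and then to read off the equality of the two Poisson brackets from the equality of the induced Hamiltonian flows on the coordinates $\a$.

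The first step is to set up the dictionary between the variational derivative $\deltaF$ appearing in \eqref{firstB} and the left and right group gradients $\nabla_s f$, $\nabla_s' f$ appearing in \eqref{twisted}. Since $\F$ and $f$ are the same function once $D$ and $\a$ are related by \eqref{Da}, both families of gradients are images of $df$ under the transpose of a coordinate map: the map $D \mapsto \a$ on the pseudo-difference side, transposed with respect to the trace pairing $\langle\,,\,\rangle$ on $\PSIDO N$, and the map $A \mapsto \a$ on the group side, where $A_n \in \GL(m)$ is the companion-type matrix determined by the frame recursion $\T\rho = \rho A$, transposed with respect to the $r$-matrix pairing on $\gl(m)^N$. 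Working this out expresses $\nabla_s f$ and $\nabla_s' f$ explicitly in terms of the coefficients of $\deltaF$ and finitely many of their shifts; one must be careful here to keep the two reductions compatible — the reduction of \eqref{firstB} to $\IDO N m$ with $a^m = -1$ and $a^0 = (-1)^{m-1}$, which modifies $\deltaF$ in the standard Dirac fashion, and the reduction of \eqref{twisted} to $\GL(m)^N/H^N$.

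The second step is to make both vector fields explicit along $\rho$. For $Y^\F = r^+(\deltaF D)(\gam)$, I would reduce $r^+(\deltaF D)(\gam)$ and each of its shifts $\T^j r^+(\deltaF D)(\gam)$, $j = 0,\dots,m-1$, to combinations of $\gam, \T\gam,\dots,\T^{m-1}\gam$ by repeated use of \eqref{kernel}, obtaining an explicit matrix $Q^{Y^\F}$ with $(\,Y^\F, \T Y^\F,\dots,\T^{m-1}Y^\F) = \rho\, Q^{Y^\F}$, whose entries are built from the coefficients of the truncation $r^+(\deltaF D)$, the invariants $a^k$, and their shifts. For $X^f$, the analogous matrix $Q^{X^f}$ of \eqref{Q} is, as recalled there, assembled from $\nabla_s f$ and $\nabla_s' f$. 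Substituting the formulas of the first step into that assembly, the goal is the identity $Q^{Y^\F} = Q^{X^f}$; since $\rho$ is invertible this forces $X^f = Y^\F$ along $\gam$. The corollary then follows formally: the flow $\gam_t = X^f$ induces on the coordinates $\a$ exactly the $f$-Hamiltonian flow of the reduced bracket of Theorem \ref{reducedbracket}, while $\gam_t = Y^\F$ induces on $\a$ the $\F$-Hamiltonian flow of \eqref{firstB} (because $\gam$ determines $D$, hence $\a$, and $D$ evolves by $D_t = r^+(D\deltaF)D - D\, r^+(\deltaF D)$); as this holds for every $f = \F$ and a Poisson bivector is determined by its Hamiltonian vector fields, the two brackets coincide on the coordinates $\a$.

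I expect the main obstacle to be the first step: transporting $\deltaF$, which lives in the pseudo-difference algebra with its trace pairing, to the group gradients in $\gl(m)^N$ with the $r$-matrix pairing, while keeping the bookkeeping of shifts and of the $r^+$-truncation exactly consistent with the two reductions. Once that dictionary is in place, everything downstream is a finite, if lengthy, matrix computation.
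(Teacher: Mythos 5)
Your proposal takes essentially the same route as the cited proof: the paper defers this theorem entirely to \cite{IM}, noting only that the argument there is constructive, and the machinery it imports from \cite{IM} in Section 6 --- the matrix $Q^{X^f}$ of \eqref{Q} expressed through the left/right gradients $\nabla f, \nabla' f$, which are in turn written in terms of the $\delta_{a^r}f$, i.e.\ of the coefficients of $\delta_D\F$ --- is exactly the gradient dictionary and $Q$-matrix comparison you outline. The corollary is drawn the same way, since equal Hamiltonian vector fields for every Hamiltonian force equal brackets on the coordinates $\mathbf{a}$.
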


\section{A Poisson pencil}
  The space $\DO{N}{m}$ is a Poisson submanifold with the quadratic bracket (\ref{firstB}) defined on the space or difference operators. In this section we will identify its compatible bracket. Consider the $1$-parameter family of maps
\[
\phi_\lambda : \DO{N}{m} \to \DO{N}{m}
\]
with $\lambda \in \RR$, defined as $\phi_\lambda (\sum_{i = 0}^m a^i \T^i) = \sum_{i = 1}^m a^i \T^i + \lambda^{-1} a^0$. Furthermore, consider the push-forward of the quadratic bracket (\ref{firstB})
\(
\{, \}_\lambda = (\phi_\lambda)_\ast (\{,\})\) defined as
\begin{equation}\label{pencil}
\{\F, \G\}_\lambda({\mathcal D}) = \{\F\circ\phi_\lambda, \G\circ \phi_\lambda\}(\phi_\lambda^{-1}({\mathcal D})).  
\end{equation}
Clearly $\{,\}_\lambda$ is Poisson for any $\lambda \ne 0$. 

\begin{theorem} The bracket (\ref{pencil}) is a Poisson bracket for any $\lambda\in \RR$. In fact, it is a Poisson pencil, that is, linear in $\lambda$.
\end{theorem}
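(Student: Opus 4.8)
The plan is to show that the push‑forward bracket $\{,\}_\lambda$ depends on $\lambda$ only through the term carrying the $a^0$ coefficient, and that this dependence is affine in $\lambda$; combined with the fact that $\{,\}_\lambda$ is Poisson for $\lambda\neq 0$, linearity in $\lambda$ forces the bracket to be Poisson for all $\lambda$, including $\lambda=0$. Concretely, I would first record how $\phi_\lambda$ interacts with the pseudo‑difference calculus used in \eqref{firstB}. Writing $D=\sum_{i=0}^m a^i\T^i$ and $\mathcal D = \phi_\lambda(D)$, the inverse map is $\phi_\lambda^{-1}(\mathcal D)=\sum_{i=1}^m \mathcal{a}^i\T^i+\lambda\,\mathcal{a}^0$, so a curve $\mathcal D(\epsilon)$ in $\DO{N}{m}$ pulls back to $D(\epsilon)=\phi_\lambda^{-1}(\mathcal D(\epsilon))$ whose $\epsilon$‑derivative simply rescales the order‑zero part by $\lambda$. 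Dually, if $\delta_D(\F\circ\phi_\lambda)$ is the variational derivative at $D=\phi_\lambda^{-1}(\mathcal D)$, then the invariance pairing $\langle\,,\,\rangle$ and the chain rule give $\delta_D(\F\circ\phi_\lambda)=\phi_\lambda^\ast(\deltaF)$ where $\phi_\lambda^\ast$ is the adjoint of $\phi_\lambda$ with respect to $\langle\,,\,\rangle$; since $\langle V,W\rangle=\sum_n\tr(VW)$ only sees the order‑zero term of $VW$, one checks that $\phi_\lambda^\ast$ acts on a variational derivative $\deltaF=\sum_{j\le m} v^j\T^j$ by leaving $v^j$ for $j\neq 0$ alone (up to shifts) and multiplying the $\T^0$ part by $\lambda^{-1}$ — i.e. dual behavior to $\phi_\lambda$ on the operator side. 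These bookkeeping identities are the routine part.

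With these substitutions in hand, I would insert everything into \eqref{firstB} and expand $r(D\,\delta_D\F)D - D\,r(\delta_D\F\,D)$, tracking powers of $\lambda$. The key structural observation is that $\phi_\lambda^{-1}(\mathcal D)=\T(\mathcal D)+\lambda\,\mathcal{a}^0$ where $\T(\mathcal D)$ denotes the "strictly positive order" truncation (the part with $\T^1,\dots,\T^m$), so $D$ itself is affine in $\lambda$ with $\lambda$‑independent and $\lambda^1$ parts; similarly $\delta_D\F$ is affine in $\lambda^{-1}$. Multiplying these out, the bilinear expression inside the trace is a priori a Laurent polynomial in $\lambda$ of bounded degree, but the $\tr$ (extraction of the $\T^0$ coefficient) together with the specific form of $r^+(L)=L_++\tfrac12 L_0$ kills all but the constant and first‑order terms in $\lambda$: the $\lambda^{-1}$ and negative powers cannot survive because $\mathcal{a}^0$ enters $D$ multiplied by $\lambda^{+1}$ while it enters $\delta_D\F$ multiplied by $\lambda^{-1}$, and in every monomial the number of $\mathcal{a}^0$'s from the operator side and from the gradient side is controlled by the order count so that the total power of $\lambda$ lies in $\{0,1\}$. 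Making this degree count precise — i.e. showing $\{\F,\G\}_\lambda(\mathcal D)=\{\F,\G\}_\infty(\mathcal D)+\lambda\,B(\F,\G)(\mathcal D)$ for some bilinear $B$ — is the crux of the argument.

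The main obstacle I anticipate is exactly this degree bound: one must verify that no $\lambda^{-1}$ (or higher negative‑power) term appears after taking the trace, which requires a careful accounting of how $a^0$ (equivalently $\mathcal{a}^0$) propagates through the products $D\,\delta_D\F$ and $\delta_D\F\,D$ and through the projections $(\cdot)_+$, $(\cdot)_0$. A clean way to organize this is to note that $\phi_\lambda$ is conjugate to a grading automorphism: introduce the derivation $E$ on $\PSIDO{N}$ that multiplies $b^i\T^i$ by $i$ (mod the trivial issue that $\phi_\lambda$ only rescales the $i=0$ part, not a full grading), or more honestly, observe that $\phi_\lambda = \mathrm{Ad}$‑like only on the order‑zero slot; then the $\lambda$‑dependence of the push‑forward bracket is governed by how the $r$‑matrix term fails to commute with this rescaling, and modified Yang–Baxter \eqref{app:mybe} plus multiplicativity (Proposition \ref{app:mainprop}) constrain that failure to be first order. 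Once linearity $\{,\}_\lambda=\{,\}_\infty+\lambda\{,\}'$ is established, the Jacobi identity for $\{,\}_\lambda$ at every $\lambda$ (including $0$ and $\infty$) follows: it holds for a dense set of $\lambda$ (all $\lambda\neq 0$) and the Jacobiator is polynomial in $\lambda$ of degree at most three, hence vanishes identically. Finally, to confirm it is genuinely a \emph{pencil} and not a constant bracket, I would exhibit that $\{,\}'$ is nonzero — e.g. by evaluating on the linear Hamiltonians dual to $a^0$ — and identify $\{,\}_\infty$ with the quadratic bracket \eqref{firstB} restricted to operators with the $a^0$ slot frozen, which is the companion structure we are after.
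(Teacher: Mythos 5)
Your overall strategy coincides with the paper's: pull $D$ and the variational derivatives back through $\phi_\lambda$ (so that the operator becomes $D_++\lambda D_0$ while $\deltaF$ becomes $V_-+\lambda^{-1}V_0$ and likewise for $\G$), expand the quadratic bracket as a Laurent polynomial in $\lambda$, show that the negative powers drop out, and then conclude Jacobi for every $\lambda$ from the fact that the Jacobiator is polynomial in $\lambda$ and vanishes for all $\lambda\neq 0$. The setup and the closing limiting argument are fine. The problem is that the one step you yourself call ``the crux'' --- the vanishing of the $\lambda^{-2}$ and $\lambda^{-1}$ coefficients --- is not carried out, and the mechanism you propose for it is not the right one. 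A priori the expression $\sum_n\tr\bigl(\bigl[r^+(DV)D-Dr^+(VD)\bigr]W\bigr)$ with $D=D_++\lambda D_0$, $V=V_-+\lambda^{-1}V_0$, $W=W_-+\lambda^{-1}W_0$ contains powers $\lambda^{-2}$ through $\lambda^{1}$, and the negative powers do \emph{not} die monomial by monomial under the trace, as your degree count suggests.

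Concretely: the $\lambda^{-2}$ coefficients $\tr(V_0D_+W_0D_+)$ and $\tr(D_+V_0D_+W_0)$ do vanish by an order count (a product of total strictly positive order has no $\T^0$ term), but the $\lambda^{-1}$ coefficient of each of the two halves of the bracket reduces to the generically nonzero quantity $\sum_n\tr(V_0D_+W_-D_+)$, respectively $\sum_n\tr(D_+V_0D_+W_-)$. These survive individually and cancel only against each other, in the difference $r^+(DV)D-Dr^+(VD)$, via the cyclicity of the periodic trace, $\sum_n\tr(AB)=\sum_n\tr(BA)$. So the assertion that ``the trace together with the form of $r^+$ kills all but the constant and first-order terms'' is false for each half separately, and the bookkeeping of how many $a^0$'s enter from the operator side versus the gradient side does not by itself give the degree bound; the appeal to the modified Yang--Baxter equation and multiplicativity to ``constrain the failure to be first order'' is too vague to substitute for the computation. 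Once you write out the handful of $\lambda^{-2}$ and $\lambda^{-1}$ terms, discard those of strictly positive order, and invoke cyclicity (which is exactly what the paper does), the rest of your argument goes through. Your final point about exhibiting a nonzero $\lambda$-coefficient is not needed for the statement, since ``pencil'' here asserts only linearity in $\lambda$; the explicit identification of that coefficient as the companion bracket is done in the paper after the theorem.
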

\begin{proof}
The proof is a straightforward calculation. For simplicity we will denote the operator $\delta_D\F$ by $V$ and $\delta_D\G$ by $W$. We will denote by $V_{+}$ (resp. $V_{-}$)  its positive (resp. negative) part as operator, and $V_0$ its zero term. Therefore
\[
\phi_\lambda^{-1}(D) = \lambda D_0 + D
\]
and 
\[
\delta (\F\circ\phi_\lambda )= \phi_\lambda^\ast V = \lambda^{-1} V_0 + V_-.
\]
(Notice that $V_+ = 0$). We will next substitute these values in (\ref{pencil}) and use (\ref{firstB}). This will require calculating individual terms, which we will do next.
\[
r^+(\delta (\F\circ\phi_\lambda \phi_\lambda^{-1}(D)) = r^+((\lambda^{-1} V_0 + V_-)(\lambda D_0 + \DD)) = \lambda^{-1}V_0D_++\frac 12 V_0 D_0+r^+(V_-D_+),
\]
and from here
\begin{multline}\label{term1}
\phi^\ast_\lambda \sum_n\tr\left( D_n r^+(\deltaF D_n), \deltaG\right) 
\\ = \sum_n\tr\left(\lambda^{-1}V_0D_++\frac 12 V_0 D_0+r^+(V_-D_+), , (W_-+\lambda^{-1}W_0)(D_+\lambda D_0)\right).
\end{multline}
The terms in 
\begin{equation}\label{term2}
\phi^\ast_\lambda \sum_n\tr\left( r^+(D_n\deltaF )D_n, \deltaG\right)
\end{equation}
will be analogous but with the factors in different order. When both terms are brought together in (\ref{pencil}), the terms in (\ref{term1}) will carry a negative sign.

The coefficient of $\lambda^{-2}$ in (\ref{term1}) is given by $\tr(V_0D_+W_0D_+)$, and the corresponding coefficient in (\ref{term2}) can be equally calculated to be $\tr(D_+V_0D_+W_0) = \tr(V_0D_+W_0D_+)$. Therefore, they will cancel in (\ref{pencil}).

The coefficient of $\lambda^{-1}$ in (\ref{term1}) is
\begin{multline*}
\tr(V_0 D_+W_-D_++V_0 D_+W_0D_0+\frac12 V_0 D_0W_0D_++ r^+(V_- D_+)W_0D_+) \\= \tr(V_0 D_+W_-D_+)\hskip2in
\end{multline*}
and the corresponding one for (\ref{term2}) can be equally calculated to be $\tr(D_+V_0 D_+W_-)$. Thus, the $\lambda^{-1}$ term in (\ref{pencil}) equals
\[
\sum_n -\tr(V_0 D_+W_-D_+) + \tr(D_+V_0 D_+W_-) = 0
\]
Therefore, the expression in (\ref{pencil}) is indeed a pencil and the coefficients of $\lambda^1$ and $\lambda^0$ define compatible Poisson brackets. 
\end{proof}
One can find a companion bracket straightforwardly. The coefficient of $\lambda$ in (\ref{term1}) is given by
\[
\tr\left((\frac 12 V_0D_0+r^+(V_-D_+))W_-D_0\right) = \tr(r^+(V_-D_+))W_-D_0).
\]
And when placed in (\ref{pencil}) together with its counterpart in (\ref{term2}), we have the $\lambda$ coefficient of (\ref{pencil}) to be
\[
\{\F, \G\}_2(D) = \sum_n \tr\left( r^+(D_+V_-)D_0 W_-- r^+(V_-D_+)W_-D_0 \right).
\]
This is the companion bracket to our original quadratic bracket. While this bracket is also quadratic, upon reduction to $\SL(m)$ we do have a linear bracket since $D_0 = (-1)^{m-1}$. In that case, the bracket becomes
\begin{multline}\label{secondB}
\{\F, \G\}_2(D) = (-1)^{m-1}\sum_n \tr\left([(D_+V_-)_+ + \frac 12 (D_+V_-)_0] W_- - [(V_-D_+)_+W_--\frac12(V_-D_+)_0 W_-\right) 
\\ = (-1)^{m-1}\sum_n\tr([D_+, V_-]_+ W).
\end{multline}

\section{Pre-symplectic forms on polygonal vector fields and the equivalence of Poisson pencils}
The authors of \cite{CM} defined a pair of Poisson brackets and lifted them to pre-symplectic forms on projectively invariant vector fields on polygons in $\RP^m$. In this section we will describe the corresponding pre-symplectic forms for the centro-affine case ($a^0$ generic), and we will show that the Poisson pencil just found on our previous section is equal to the one found in \cite{CM} for the projective case. Consider the two Poisson tensors generating the pencil
\begin{equation}\label{firstB}
\{\F,\G\}_1(D)  = \sum_n\tr\left( r(D_n \deltaF) D_n- D_n r(\deltaF D_n), \deltaG\right)
\end{equation}
where $r(L) = \frac12(L_+-L_-)$, and
\begin{equation}\label{linear}
\{\F,\G\}_2(D) = \sum_n \tr\left(\left[D_0((\deltaF)_-(D_n)_+)_+ - ((D_n)_+(\deltaF)_-)_+D_0\right]\deltaG\right)
\end{equation}
Asume $X^f$ is defined as in (\ref{gammaev}), and define $\rho_n =(\gamma_n, \gamma_{n+1},\dots,\gamma_{n+m-1})$ and $\d_n = \det\rho_n$. In the projective case $\d_n = 1$ for all $n$ and $\gamma$ is a lift for the projective polygon determined uniquely by that property (\cite{MW}). Define also the discrete form $\theta = (\theta_n)$ along polygons given by 
\[
\theta_n(X) = \det(X_n,\gamma_{n+1},\dots,\gamma_{n+m-1})
\]
whenever $X$ is a vector field along $\gamma$ in $\RR^m$.

Let $\F: DO(N,m) \to \RR$ and let $f:\RR^{(m+1)N}\to \RR$ be defined as
\[
\F(\left(\sum^{m}_{r=0} a^r\T^r\right)) = f((a^r)).
\]
From the definition of variational derivative, we can see that
\[
\deltaF = \sum_{r=0}^m \T^{-r} \delta_{a^r} f
\]
where $\delta_{a^r} f$ is the standard variational derivative of $f$ in the $a^r$ direction. Recall that the reduction of both brackets to the $\SL(m)$ case is explicitly achieved using the left and right multiplication so that if 
\[
L = \sum_{k=0}^m a^k \T^k
\]
we reduce by finding $a, b$ such that $a^{-1} b^{-1} L b = D$. The same process is used if we are to work in the $\GL(m)$ case, in which case we only need to find $b$ so that $a^m = -1$, while $a^0$ is still unconstrained.

If we equate the $m$ terms,  $  b^{-1} a^m b_m = -1$, or $b^{-1} b_m = -\frac{1}{a^m}$, which has a unique solution if $(N, m) = 1$. Notice that if  $a^m = -1$, then $b = 1$. In any case, the reduced variational derivative in $\GL(m)$ ($a^m = -1$) will look like
\[
\deltaF = \sum_{r=0}^{m-1} \left[\T^{-r} \delta_{a^r} f + \T^{-m} \beta^r \delta_{a^r} f \right]
\]
for some $\beta^r$ that can easily be found explicitly, but which will cancel in our calculations and hence we do not need to know.

\begin{theorem}  Assume $N$ and $m$ are co-prime. Then, the Poisson bracket (\ref{linear}) satisfies
\[
\{\F,\G\}_2(D_n) = (-1)^{m-1}\omega_2(X^g, X^f)
\]
where $\omega_2 = \sum_nd (\frac1{\d_n}\theta_n)$, that is
\begin{eqnarray*}
\omega_2 (X,Y) &=& \sum_n \frac 1{\d_{n}}\left[Y\theta_n(X)-X\theta_n(Y)-\theta_n([Y,X])+\frac 1{\d_n}(\theta_n(Y) X(\d_n) - \theta_n(X) Y(\d_n)) \right],
\end{eqnarray*}
and where $X^f$ is as in (\ref{gammaev}).
\end{theorem}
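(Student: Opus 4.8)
The plan is to reduce the stated identity to a purely algebraic statement about the moving‑frame matrices attached to the polygonal vector fields $X^f$, and then verify that statement by a direct expansion of both sides in the invariants $a^k$ and the variational derivatives $\delta_{a^r}f,\delta_{a^r}g$. Write $\rho_n=(\gamma_n,\dots,\gamma_{n+m-1})$ and $\d_n=\det\rho_n$ as in the statement, and for a polygonal vector field $X$ with $\gamma_t=X$ put $Q^X_n=\rho_n^{-1}(X_n,\dots,X_{n+m-1})$, so that $\rho_t=\rho\,Q^X$. By the twisted equivariance of $\gamma$ and of polygonal vector fields, both $Q^X_n$ and $\d_n^{-1}\theta_n$ are $N$‑periodic in $n$, so every sum over $n$ below is finite. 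Expanding $\theta_n(X)=\det(X_n,\gamma_{n+1},\dots,\gamma_{n+m-1})$ in the basis $\rho_n$, only the $\gamma_n$‑component of $X_n$ survives, whence $\d_n^{-1}\theta_n(X)=(Q^X_n)_{11}$. Consequently the $1$‑form $\alpha:=\sum_n\d_n^{-1}\theta_n$ obeys $\alpha(X)=\sum_n(Q^X_n)_{11}$ on every polygonal vector field, and $\omega_2=\varepsilon\,d\alpha$ for an overall sign $\varepsilon$ fixed by the convention for $d$ on $1$‑forms.

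The next step is the decisive one. From $[\partial_{X^g},\partial_{X^f}]=\partial_{[X^g,X^f]}$ applied to the frame‑valued function $\rho$ (here $\partial_X$ is differentiation along the flow $\gamma_s=X$) one gets the zero‑curvature relation
\[
Q^{[X^g,X^f]}=\partial_{X^g}Q^{X^f}-\partial_{X^f}Q^{X^g}+[Q^{X^g},Q^{X^f}].
\]
Plugging the previous paragraph into $d\alpha(X^g,X^f)=\partial_{X^g}\alpha(X^f)-\partial_{X^f}\alpha(X^g)-\alpha([X^g,X^f])$, the two derivative terms cancel against the derivative terms produced by $Q^{[X^g,X^f]}$, leaving $\omega_2(X^g,X^f)=-\varepsilon\sum_n([Q^{X^g}_n,Q^{X^f}_n])_{11}$. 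Thus the theorem is equivalent to the algebraic identity
\[
\sum_n\big([Q^{X^g}_n,Q^{X^f}_n]\big)_{11}=-\varepsilon(-1)^{m-1}\{\F,\G\}_2(D),
\]
the overall sign to be pinned down at the end. (The companion identity for $\omega_1$ fits the same scheme, with a different linear functional of $Q$ in place of the $(1,1)$‑entry.)

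To compute $Q^{X^f}$, recall from Section~4 that $X^f=r(\deltaF D)(\gamma)$. Since $D\gamma=0$ and $r^+(L)=r(L)+\tfrac12 L$, one has $r^+(\deltaF D)(\gamma)=r(\deltaF D)(\gamma)$, so we may use the cleaner representative $X^f=r^+(\deltaF D)(\gamma)=(\deltaF D)_+(\gamma)+\tfrac12(\deltaF D)_0(\gamma)$, which only involves $\gamma_n,\dots,\gamma_{n+m}$. Writing the reduced variational derivative $\deltaF=\sum_{r=0}^{m-1}\big[\T^{-r}\delta_{a^r}f+\T^{-m}\beta^r\delta_{a^r}f\big]$, one reads off the coefficients of $\deltaF D$ and then expresses each $X^f_{n+j-1}$ in the frame $\rho_n$ by iterating the recursion $D\gamma=0$, each application re‑absorbing one overhanging $\gamma_{n+m+s}$ (whose $\gamma_n$‑component is $a^0_n$ times a polynomial in the $a^k$). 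This produces the first column of $Q^{X^f}_n$ locally at $n$, and its first row as a combination of the coefficients of $\deltaF D$ at the indices $n,n+1,\dots$, weighted by those polynomials and by $a^0_n$; the contributions carrying $\beta^r$ cancel, as anticipated in the remark preceding the theorem.

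Finally one expands $(-1)^{m-1}\{\F,\G\}_2(D)$ from \eqref{linear} using the reduced forms of $\deltaF,\deltaG$ — a finite sum of products of $a^k$, $\delta_{a^r}f$, $\delta_{a^r}g$ carrying $\T$‑shifts inherited from the operator products $D_0((\deltaF)_-(D)_+)_+$ and $((D)_+(\deltaF)_-)_+D_0$ — and matches it term by term with $\sum_n([Q^{X^g}_n,Q^{X^f}_n])_{11}$ computed above, after reindexing the $n$‑sums by $N$‑periodicity. The conceptual work (the two displayed identities) is short; the main obstacle is the bookkeeping in this last step: determining the first row of $Q^{X^f}_n$, where several iterations of the recursion enter, and reconciling its shift pattern with that of the operator‑product expansion of \eqref{linear}, including the verification that the $\beta^r$‑terms drop out. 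Co‑primality of $N$ and $m$ is used to secure the unique normalization $a^m=-1$ by the right gauge (and, in the projective case, the unique lift with $\d_n\equiv1$), in which latter case the computation specializes to the one in \cite{CM}.
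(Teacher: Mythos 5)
Your proposal is correct in outline and follows the same overall skeleton as the paper's proof: both reduce $\omega_2(X^g,X^f)$ to $\sum_n\big([Q^{X^g}_n,Q^{X^f}_n]\big)_{11}$, then compute the first row and first column of $Q^{X^f}_n$ explicitly in terms of the $a^k$ and $\delta_{a^r}f$, expand the bracket (\ref{linear}) into the same kind of polynomial expression, and match. You reach the commutator identity by a different route: you apply the zero-curvature relation $Q^{[X,Y]}=\partial_XQ^Y-\partial_YQ^X+[Q^X,Q^Y]$ to the primitive $\alpha=\sum_n\d_n^{-1}\theta_n$, whereas the paper expands the determinants in $Y\theta_n(X)-X\theta_n(Y)-\theta_n([Y,X])$ directly and observes that the trace terms cancel against the $\d_n$-correction terms in $\omega_2$; your derivation is cleaner and lets the exactness $\omega_2=d\alpha$ do the work. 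The second difference is in how the entries of $Q^{X^f}$ are obtained: you propose to read them off from the operator formula $X^f=r(\deltaF\, D)(\gamma)$ and the recursion $D\gamma=0$ (and your observation that $r$ may be replaced by $r^+$ because $D\gamma=0$ is correct), while the paper imports the left and right group gradients $\nabla\F,\nabla'\F$ from \cite{IM} together with the reduction condition $\T^{-1}\nabla\F\,\T-\nabla'\F\in\h^0$ and iterates $K\nabla\F=\nabla'\F K$ to obtain $Q_{1,r+1}=-a^0\delta_{a^r}f$ and the $Q_{r,1}$. Your route is self-contained on the difference-operator side, but it is exactly where the bookkeeping you defer lives --- the first-row entries require several iterations of the recursion and the verification that the $\beta^r$ terms drop out --- so as written the proposal establishes the structure of the argument but stops short of the explicit entries of $Q^{X^f}_n$ and the term-by-term match, which are the steps the paper actually carries out.
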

\begin{proof} 

Straightforward calculations show that 
\begin{eqnarray*}
&&D_0((\deltaF)_-(D_n)_+)_+ - ((D_n)_+(\deltaF)_-)_+D_0\\&=&\sum_{r=1}^{m-1}\left[\T^{m-r}a^0\delta_{a_n^r}f-a^0\T^{-r}\delta_{a_n^r}f\T^m\right]
+\sum_{r=1}^{m-2}\sum_{s=r+1}^{m-1}\left[a^0\T^{-r}a_n^s\delta_{a_n^r}f\T^s- a_n^s\T^{s-r}\delta_{a_n^r}f a^0\right].
\end{eqnarray*} 
Using this expression, we get 
\begin{eqnarray}\label{first}
\{\F,\G\}_2(D_n) &=& \sum_n \sum_{r=1}^{m-1} \delta_{a_n^{m-r}}g \left[\T^{m-r}a^0\delta_{a_n^r}f -a^0\T^{-r}\delta_{a_n^r}f \T^m\right]\T^{r-m}\\&+& \sum_{r=1}^{m-2}\sum_{s=r+1}^{m-1}\delta_{a_n^{s-r}}g\left[a^0\T^{-r} a_n^s\delta_{a_n^r}f\T^s-a_n^s\T^{s-r}a^0\delta_{a_n^r}f\right]\T^{r-s}.
\end{eqnarray}

Let $Q_n^X\in \gl(m)$ be  defined as in (\ref{Q}), that is, defined by the relation 
\[
X(\rho) = \rho Q^X.
\]
If $X = X^f$ is as in (\ref{gammaev}), and $A$ is defined by $\rho_{n+1} = \rho_n A$, then $X(A) = A \T Q - Q A$, and

\[
Q^X = (\q, A\T\q,\dots,A\T A\T A\dots \T A\T \q)
\]
where $\q$ is an invariant vector defined by $X(\gamma) = \rho\q$ and $\T$ appears $m-1$ times in the last column of the matrix above. As pointed out before, the matrix $Q^X$ was explicitly related to the left and right gradients in the Lie group (called $\nabla \F$ and $\nabla' \F$ in (\ref{twisted})) in \cite{IM} and we will refer the reader to that paper for details of this relation, we will simply quote them next. 

We know from \cite{IM} that if $X = X^f$, then
\begin{equation}\label{first row}
Q_n^Xe_1 = \frac 12(\nabla_{n-1}\F + \nabla_n'\F)e_1
\end{equation}
(equation (35) in \cite{IM}), where 

\begin{equation}\label{nablas}
\nabla'_nF = \begin{pmatrix} -a^0_n\delta_{a_n^0} f& -a^0_n(\delta_{{\bf a}_n} f)^T\\ \ast&\ast\end{pmatrix}, \quad \nabla_n F = \begin{pmatrix} \ast & \ast\\ -(\delta_{{\bf a}_{n}} f)^T& -a_n^0\delta_{a^0_{n}} f-{\bf a}_{n}\cdot \delta_{{\bf a}_{n}} f\end{pmatrix}
\end{equation}
(Lemma 4.5 in \cite{IM}). Recall that since (\ref{twisted}) has been reduced using the Lie group discrete gauge action of $H^N$, we have that
\[
 \nabla'_{n+1} F - \nabla_n F \in \h^0
 \]
  for any $n$, where $\h$ is the Lie algebra of $H$, and $\h^0$ is its annihilator.

Using this information we can find both $Q^X_{1,r+1}$ and $Q^X_{r+1,1}$. Indeed
\[
Q^Xe_{r+1} = K\T Qe_r \T^{-1}= K\T K \dots \T K \T Q e_1\T^{-r} = \frac 12 K\T K \dots \T K (\nabla \F + \T\nabla'\F\T^{-1}) e_1\T^{-r}
\]
with $\T$ appearing $r$ times. We notice that when calculating $Q^X_{1,r+1}$ no entries from the first row of  $\nabla'\F$ are involved. Therefore, since 
\begin{equation}\label{h0}
\T^{-1}\nabla_n\F\T - \nabla_n'\F \in \h^0
\end{equation} 
we can substitute $\nabla\F$ by $\T\nabla'\F\T^{-1}$ above to obtain
\[
e_1^TQ^Xe_{r+1} = K\T K \dots \T K (\nabla\F) e_1\T^{-r}.
\]
Finally, $K \nabla\F = \nabla' \F K$, and so
\begin{eqnarray*}
e_1^TQ^Xe_{r+1} &=&e_1^T K\T K \dots \T K \T(\nabla'\F) K e_1\T^{-r} = e_1^TK\T K \dots \T K\T (\nabla'\F) e_2\T^{-r}\\&=&e_1^TK\T K\dots\T K \nabla\F e_2\T^{-r+1}= e_1^TK\T K\dots\T\nabla'\F K e_2\T^{-r+1} = \dots= e_1^T K\T\nabla'\F e_r\T^{-1}
\end{eqnarray*}
\begin{equation}\label{Q1r+1}
Q_{1,r+1} = e_1^T\nabla'\F e_{r+1} = -a^0 \delta_{a^{r}} f
\end{equation}
for any $r=1,2,\dots,m-1$.

Also from (\ref{first row}), (\ref{h0}) and (\ref{nablas}) we see that if $r=2,\dots,m$
\begin{eqnarray*}
Q^X_{r,1} &=& \frac{e_r^T}2(\T^{-1}\nabla \F\T+\nabla'\F)e_1 = e_r^T\T^{-1}\nabla\F e_1\T= e_{r+1}^T\T^{-1}K\nabla\F e_1\T-\T^{-1}a^re_m^T\nabla\F e_1\T\\&=&e_{r+1}^T\T^{-1}K\nabla\F e_1\T+\T^{-1}a^r\delta_{a^1}f \T
=e_{r+1}^T\T^{-1}\nabla'\F K e_1\T+\T^{-1}a^r\delta_{a^1}f\T \\&=& e_{r+1}^T\T^{-1}\nabla'\F e_2\T+\T^{-1}a^r\delta_{a^1}f\T =e_{r+1}^T\T^{-2}\nabla\F e_2\T^2+\T^{-1}a^r\delta_{a^1}f\T
\end{eqnarray*}
Iterating this process we obtain that 
\begin{eqnarray}\label{Qr1}
Q_{r,1} &=& \sum_{s=1}^{m-r}\T^{-s}a^{s+r-1}\delta_{a^s} f \T^s+ e_m^T\T^{r-m-1}\nabla\F e_{m-r+1} \T^{m-r+1}\\ &=& \sum_{s=1}^{m-r}\T^{-s}a^{s+r-1}\delta_{a^s} f \T^{-s}-\T^{r-m-1}\delta_{a^{m-r+1}}f\T^{m-r+1}.
\end{eqnarray}

We are now ready to put everything together. Using all the information above we can conclude that
\begin{eqnarray*}
&&\sum_n \frac 1{\d_{n}}d(\theta_n)(Y,X) = \sum_n \frac 1{\d_{n}}\left[Y\theta_n(X)-X\theta_n(Y)-\theta_n([Y,X])\right]\\
&=&\sum_n\frac 1{\d_{n}}\sum_{k=1}^{m-1}[\det(X_n, \gamma_{n+1},\dots, Y_{n+k},\dots,\gamma_{n+m-1})- det(Y_n,\gamma_{n+1},\dots, X_{n+k},\dots,\gamma_{n+m-1})]\\
& =& \sum_n  \left[(Q_n^X)_{1,1}\tr(Q_n^Y)- (Q_n^Y)_{1,1}\tr(Q_n^X)+\sum_{k=1}^m\left[(Q_n^Y)_{k,1}(Q_n^X)_{1,k} - (Q_n^X)_{k,1}(Q_n^Y)_{1,k}\right]\right]\\
&=& \sum_n  \left[\frac1{\d^2_n}\theta_n(X)Y(\d_n)- \frac1{\d^2_n}\theta_n(Y)X(\d_n)+\sum_{k=1}^m\left[(Q_n^Y)_{k,1}(Q_n^X)_{1,k} - (Q_n^X)_{k,1}(Q_n^Y)_{1,k}\right]\right],
\end{eqnarray*}
 and substituting (\ref{Q1r+1}) and (\ref{Qr1}) above, we get, after some minor rewriting,
\begin{eqnarray*}
\omega_2(X^f, X^g) &=& \sum_n \left(\sum_{r=1}^{m-1}a_n^0 \delta_{a_{n-r}^r}g\delta_{a_{n}^{m-r}} f - \sum_{r=1}^{m-1}a_{n+m-r}^0\delta_{a_{n}^{m-r}} f\delta_{a_{n+m-r}^r}g\right) \\&-&\sum_n\left(\sum_{r=1}^{m-2}\sum_{s=1}^{m-r-1}a_n^0a_{n+s}^{r+s}\delta_{a_n^r}f\delta_{a_{n+s}^{s}} g-\sum_{r=1}^{m-2}\sum_{s=r+1}^{m-1}a_{n+m-s}^0a_n^{r+m-s}\delta_{a_{n+m-s}^r} g\delta_{a_n^{m-s}}f\right).
\end{eqnarray*}
We can compare this expression times $(-1)^{m-1}$ to (\ref{first}) and, after some straightforward modifications, conclude that they are equal. \end{proof}

\begin{theorem}  The Poisson bracket (\ref{firstB}) coincides with the negative of the bracket  previously defined as a reduction of (\ref{twisted}). Furthermore, let $D_n = \sum_{k=0}^{m-1} a_n^k \T^k - \T^{m}$ so that $D_n(\gamma) = 0$ for any lift of its associated projective polygon $\gamma$. Then
\[
\{\F,\G\}_1(D_n) = (-1)^{m-1}\omega_1(X^g, X^f)
\]
where 
\[
\omega_1 (X,Y) = \frac 12\sum_n \frac 1{\d_{n+1}}\left[ X\theta_n(D(Y)) - Y \theta_n(D(X)) - \theta_n(XD(Y)-YD(X)))\right]
\]
\[
- \frac1{\d_n\d_{n+1}}\left[\theta_n(D(Y)) X(\d_n)- \theta_n(D(X))Y(\d_n)\right].
\]

\end{theorem}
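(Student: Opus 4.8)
The plan is to mirror, almost line for line, the proof of the previous theorem (the one for $\{,\}_2$ and $\omega_2$), since the two statements are parallel: in both cases the claim is that a Poisson bracket on difference operators, when restricted to the invariant coordinates $a^k$ and paired with the pre-symplectic form, agrees with the lifted form evaluated on the geometric vector fields $X^f, X^g$. Accordingly, I would first expand $\{\F,\G\}_1(D_n)$ as given in (\ref{firstB}) in terms of the variational derivatives. Using $\deltaF = \sum_{r=0}^m \T^{-r}\delta_{a^r}f$ (and its reduced $\GL(m)$ form, noting as before that the $\beta^r$ terms drop out), I would compute the operator $r(D_n\deltaF)D_n - D_n r(\deltaF D_n)$ explicitly as a pseudodifference operator, collect its coefficients, and then pair against $\deltaG$ via the trace form. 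This produces a double sum in $r,s$ of products $\delta_{a^\bullet}f\,\delta_{a^\bullet}g$ with shifts — the analogue of equation (\ref{first}) — which I will call the ``operator side.''

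Second, I would expand the right-hand side, $\omega_1(X^g, X^f)$. Since $\omega_1$ is built from $\theta_n$ and $\d_n$ exactly as $\omega_2$ is (just with $D(X)$ in place of $X$ in one slot, and $\d_{n+1}$ in place of $\d_n$ in the normalization), I can reuse the identity that expresses $\theta_n(X)$, $X\theta_n(Y)$, etc.\ in terms of entries of the matrices $Q^X_n$, specifically the first row and first column entries. The new ingredient is evaluating $\theta_n(D(Y))$: since $D(\gamma)=0$, one has $D(Y) = -D_t(\gamma)$-type expressions, or more directly $D$ applied to a vector field along $\gamma$ can be re-expressed using $\rho$ and $Q^Y$; I expect $\theta_n(D(Y))$ to reduce to a combination of the entries $(Q^Y_n)_{1,k}$ with the $a^k$ coefficients built in, since $D = \sum a^k\T^k - \T^m$ acts on the columns of $\rho$. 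The key formulas (\ref{Q1r+1}) and (\ref{Qr1}) for $Q_{1,r+1}$ and $Q_{r,1}$ carry over verbatim from the previous proof, because they only used (\ref{first row}), (\ref{h0}), and (\ref{nablas}), none of which changes.

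Third, I would substitute (\ref{Q1r+1}) and (\ref{Qr1}) into the $Q$-entry expression for $\omega_1$, obtaining a second double sum in $r,s$ of products $\delta f\,\delta g$ with shifts — the ``geometric side'' — and then match it term by term, up to the overall sign $(-1)^{m-1}$, against the operator side from the first step. The matching of shifts is the bookkeeping heart of the argument, exactly as in the $\omega_2$ case where ``some straightforward modifications'' were invoked. The statement that (\ref{firstB}) is (minus) the reduction of (\ref{twisted}) then follows as a corollary, or can be cited from the earlier theorem of \cite{IM} together with the identification $X^f = Y^\F$.

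The main obstacle I anticipate is the handling of $\theta_n(D(Y))$ and the $\d_{n+1}$ normalization: unlike $\omega_2$, where $\theta_n$ is applied directly to $X$, here the shift operator $\T^m$ inside $D$ pushes indices forward by $m$, so the column $\gamma_{n+m}$ that $\T^m\gamma$ would produce must be eliminated using the kernel relation (\ref{kernel}); keeping the index shifts consistent through this substitution, and reconciling the resulting shifted determinants with the $1/\d_{n+1}$ versus $1/\d_n$ placement, is where sign and index errors are most likely to creep in. In the projective case $\d_n \equiv 1$, which simplifies the check considerably, so I would verify that case first to pin down all signs and then track the $\d_n$-dependent terms separately, confirming they assemble into the exact-form $d(\frac{1}{\d_{n+1}}\theta_n(D(\cdot)))$ structure displayed in the statement.
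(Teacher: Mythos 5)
Your outline is workable and shares the paper's geometric ingredients, but the paper's actual proof takes a shorter route than the one you propose, and the difference is worth noting. You plan to mirror the $\omega_2$ argument: expand $\{\F,\G\}_1(D_n)$ as an explicit double sum in the $\delta_{a^r}f\,\delta_{a^s}g$ (an analogue of \eqref{first}), expand $\omega_1(X^g,X^f)$ via both formulas \eqref{Q1r+1} and \eqref{Qr1}, and match term by term. The paper never produces the ``operator side'' double sum at all. Its key step is the one you mention only in passing: differentiating $D_n(\gamma)=0$ along $Y$ gives $D_n(Y)=-\sum_{s=0}^{m-1}(a_n^s)_{t_Y}\gamma_{n+s}$, so after Leibniz expansion of $X\theta_n(D(Y))-Y\theta_n(D(X))-\theta_n(XD(Y)-YD(X))$ the only surviving determinants are $(a_n^k)_{t_Y}\det(X_{n+k},\gamma_{n+1},\dots,\gamma_{n+m-1})$ plus the $s=0$ terms, which produce exactly the $X(\d_n)$, $Y(\d_n)$ and $\theta_n(D(X))$ corrections appearing in $\omega_1$. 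Only \eqref{Q1r+1} is then needed (together with $a_n^0=(-1)^{m-1}\d_{n+1}/\d_n$) to identify $\det(X^f_{n+k},\gamma_{n+1},\dots,\gamma_{n+m-1})/\d_{n+1}$ with $(-1)^m\delta_{a_n^k}f$, and the whole expression collapses to $(-1)^{m-1}\sum_n\sum_k (a_n^k)_{t_{X^g}}\delta_{a_n^k}f$, which \emph{is} $(-1)^{m-1}\{f,g\}_1(\a)$ because $X^g$ was already shown to induce the $g$-Hamiltonian flow on the invariants. So the paper trades your self-contained but heavy bookkeeping (computing $r(D\deltaF)D-Dr(\deltaF D)$ explicitly and reconciling two double sums with shifts) for a reliance on the earlier identification $X^f=Y^\F$; your version would re-derive that identification implicitly, at the cost of exactly the index/sign bookkeeping you flag as the danger point. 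Both routes are legitimate; if you carry yours out, the one simplification you should not forgo is replacing $D(Y)$ by $-\sum_s(a_n^s)_{t_Y}\gamma_{n+s}$ before taking determinants, since expanding $D(Y)=\sum_k a^kY_{n+k}-Y_{n+m}$ column by column through products of the $K$ matrices, as you suggest as an alternative, makes the $\T^m$ shift and the $1/\d_{n+1}$ normalization considerably harder to track.
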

\begin{proof} The fact that the reduction of (\ref{twisted}) and (\ref{firstB}) are equal was proved in \cite{IM}. 

Given that $D_n(\gamma) = 0$ for all $n$, and differentiating in the direction of a vector field $Y$, we obtain
\[
(D_n)_{t_Y}(\gamma) + D_n Y = 0 \quad \to \quad D_n(Y) = -\sum_{s=0}^{m-1} (a_n^s)_{t_Y} \gamma_{n+s}.
\]
From here, if the $n+k$ field is placed in the $\gamma_{n+k}$ position in $\d_n$, we have 
\begin{eqnarray*}
&&X\theta_n(D(Y)) - Y \theta_n(DX)) - \theta_n(XD(Y)-YD(X))\\ &=& \sum_{k=1}^{m-1}\left[\det(D_n Y, \gamma_{n+1},\dots, X_{n+k},\dots,\gamma_{n+m-1})- \det(D_n X, \gamma_{n+1},\dots, Y_{n+k},\dots,\gamma_{n+m-1})\right]\\
&=& \sum_{k=1}^{m-1}(a_n^k)_{t_Y}\det(X_{n+k}, \gamma_{n+1},\dots, \gamma_{n+k},\dots,\gamma_{n+m-1})- \sum_{k=1}^{m-1}(a_n^k)_{t_X}\det(Y_{n+k}, \gamma_{n+1},\dots, \gamma_{n+k},\dots,\gamma_{n+m-1})\\
&-& (a_n^0)_{t_Y}\sum_{k=1}^{m-1}\det(\gamma_n, \gamma_{n+1},\dots, X_{n+k},\dots,\gamma_{n+m-1}) + (a_n^0)_{t_X}\sum_{k=1}^{m-1}\det(\gamma_n, \gamma_{n+1},\dots, Y_{n+k},\dots,\gamma_{n+m-1})\\
&=& \sum_{k=0}^{m-1}(a_n^k)_{t_Y}\det(X_{n+k}, \gamma_{n+1},\dots, \gamma_{n+k},\dots,\gamma_{n+m-1})- \sum_{k=0}^{m-1}(a_n^k)_{t_X}\det(Y_{n+k}, \gamma_{n+1},\dots, \gamma_{n+k},\dots,\gamma_{n+m-1})\\
&-& (a_n^0)_{t_Y} X(\d_n)+ (a_n^0)_{t_X}Y(\d_n)
.\end{eqnarray*}
As in the previous proof, and using (\ref{Q1r+1})
\[ 
\det(X^f_{n+k}, \gamma_{n+1},\dots, \gamma_{n+k},\dots,\gamma_{n+m-1}) = \d_n (Q_n^{X^f})_{1,k+1} = -\d_na_n^0 \delta_{a_n^k} f.
\]
Also, given that $D_n (\gamma) = 0$, we conclude that $a_n^0 = (-1)^{m-1}\frac{\d_{n+1}}{\d_n}$. Therefore
\[
\frac 1{\d_{n+1}}det(X^f_{n+k}, \gamma_{n+1},\dots, \gamma_{n+k},\dots,\gamma_{n+m-1}) = (-1)^{m} \delta_{a_n^k} f
\]
and
\begin{eqnarray*}
&&\sum_n\frac 1{\d_{n+1}} \sum_{k=0}^{m-1} (a_n^k)_{t_{X^g}}  \det(X^f_{n+k}, \gamma_{n+1},\dots, \gamma_{n+k},\dots,
\gamma_{n+m-1}) \\&=& (-1)^m\sum_n (a_n^k)_{t_{X^g}}\delta_{a_n^k} f = (-1)^{m-1}\{f, g\}_1(\a).
\end{eqnarray*}

Finally
\[
(a_n^0)_{t _X}= \frac 1{\d_n}\det(L_n(X), \gamma_{n+1},\dots,\gamma_{n+m-1})  = \frac1{\d_n}\theta_n(D(X)).
\]
The theorem follows.
\end{proof}
\begin{corollary} Both $\omega_1$ and $\omega_2$ are closed forms when defined on the space of invariant vector fields, and hence pre-symplectic. Furthermore, $X^f = Y^f$ is the $f$-Hamiltonian vector field with respect to $\omega_1$.
\end{corollary}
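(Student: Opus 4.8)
The plan is to deduce both claims from the two preceding theorems together with the fact, established in \cite{IM}, that $\{,\}_1$ and $\{,\}_2$ are honest Poisson brackets. For $\omega_2$ there is nothing to do: by construction $\omega_2 = \sum_n d\!\left(\frac{1}{\d_n}\theta_n\right)$ is a sum (finite, by $N$-periodicity) of exact $2$-forms, hence exact, hence closed on the whole space of polygonal vector fields and a fortiori on the invariant ones; being pre-symplectic is then automatic. So the content lies in $\omega_1$ and in the Hamiltonian-field assertion.

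For the closedness of $\omega_1$ I would argue conceptually rather than expand its explicit formula. Recall from the earlier correspondence theorem and the two theorems above that $f \mapsto X^f$ is a bijective linear correspondence, on an open dense subset of polygon space, between invariant Hamiltonians modulo constants and invariant polygonal vector fields, with $X^f$ the lift of the reduced Hamiltonian vector field of $f$; in particular $X^f(\phi) = \{\phi, f\}_1$ for every invariant function $\phi$ of the coordinates $\a$. Two ingredients then finish the job. First, $\omega_1(X^g, X^f) = (-1)^{m-1}\{f,g\}_1$ is, after projection to the moduli space, simply a function of $\a$, so $X^h$ differentiates it by $X^h(\{f,g\}_1) = \{\{f,g\}_1, h\}_1$. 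Second, since the projection to the moduli space is Poisson and the lift of an invariant evolution is unique, $[X^f, X^g] = -X^{\{f,g\}_1}$, i.e. $f \mapsto X^f$ is a Lie algebra anti-homomorphism. Substituting these into Cartan's identity
\[
d\omega_1(X^f, X^g, X^h) = \sum_{\mathrm{cyc}}\Big(X^f\,\omega_1(X^g, X^h) - \omega_1([X^f, X^g], X^h)\Big)
\]
turns each of the six terms into a double Poisson bracket, and the total collapses to a nonzero multiple of the Jacobiator $\{\{f,g\}_1, h\}_1 + \{\{g,h\}_1, f\}_1 + \{\{h,f\}_1, g\}_1$, which vanishes. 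Since the fields $X^f$ span all invariant vector fields on the open dense set, $d\omega_1$ vanishes there; running the identical computation with $\{,\}_2$ reproduces the closedness of $\omega_2$, consistently with the exactness argument above. The projective version of this computation appears in \cite{CM}; only the bookkeeping of the factors $\d_n$, which are $\equiv 1$ in the projective case, changes in the centro-affine setting.

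For the last assertion, fix $g$ arbitrary and read the $\omega_1$-theorem as $\omega_1(X^g, X^f) = (-1)^{m-1}\{f,g\}_1 = (-1)^{m-1}X^g(f) = (-1)^{m-1}(df)(X^g)$, using $X^g(f) = \{f,g\}_1$. Skew-symmetry then gives $(\iota_{X^f}\omega_1)(X^g) = -\omega_1(X^g, X^f) = (-1)^m(df)(X^g)$ for every invariant vector field $X^g$, and since these exhaust the invariant vector fields we conclude $\iota_{X^f}\omega_1 = (-1)^m\, df$ on that space. Hence $X^f$ is, up to the standard sign normalization, the $f$-Hamiltonian vector field of the pre-symplectic form $\omega_1$, and $X^f = Y^f$ by the correspondence theorem.

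I expect the main obstacle to be justifying the anti-homomorphism property $[X^f, X^g] = -X^{\{f,g\}_1}$: it requires that the polygon-to-moduli projection be a Poisson submersion with $X^f$ projecting precisely onto the reduced Hamiltonian vector field of $f$, and that an invariant evolution of the coordinates $\a$ lift to a \emph{unique} honest vector field on an open dense subset of polygon space, so that the commutator $[X^f,X^g]$ — a priori only some vector field — is forced to coincide with $X^{\{f,g\}_1}$. A secondary point of care is that $\omega_1$ and $\omega_2$ are only pre-symplectic, so ``the $f$-Hamiltonian vector field'' is determined only modulo the kernel distribution; the statement is that the particular representative $X^f = Y^f$ is an admissible choice, which is exactly what $\iota_{X^f}\omega_1 = (-1)^m df$ records.
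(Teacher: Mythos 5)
Your proposal is correct and follows essentially the same route as the paper: $\omega_2$ is closed because it is exact by construction, while closedness of $\omega_1$ is reduced, via the identity $\omega_1(X^g,X^f)=(-1)^{m-1}\{f,g\}_1$ and the (anti-)homomorphism property of $f\mapsto X^f$, to the Jacobi identity for $\{,\}_1$, with the Hamiltonian-vector-field claim read off from $\iota_{X^f}\omega_1 = \pm\, df$ on invariant fields. You merely make explicit (via Cartan's formula and the sign bookkeeping) what the paper states in two sentences, and your flagged caveats — spanning by the $X^f$ on an open dense set and the ambiguity of Hamiltonian fields modulo the kernel of a pre-symplectic form — are reasonable but do not change the argument.
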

\begin{proof}
Using that $X^f$ induces the $f$-Hamiltonian evolution with respect to $\{,\}_1$ on the invariants $\a$, we get that $X^f$ is the Hamiltonian vector field for $\omega_1$ when restricted to invariant fields.  That is, $\omega_1(X^f, Y) = Y(f)$ for every $f$ invariant. We then get that $[X^f, X^g] = X^{\{f,g\}_1}$ and from here $\omega_1$ will be a closed form on the space of invariant vector fields since the property is equivalent to $\{,\}_1$ being Poisson. 

We obtain directly that $\omega_2$ is also closed since it is an exact form. 

\end{proof}

\section{Commuting family of Hamiltonians} 
In this section we will assume that we are working on the $\SL(m)$ case. Notice that while the previous results were proven for the $\GL(m)$ case, the $\SL(m)$ case is directly obtained through reduction. We will prove that,  as it happened in the continuous case, there exists a hierarchy of completely integrable systems with $\{,\}_1$-Hamiltonians given by

\begin{equation}\label{Fr}
\F_s (D) = \sum_n \tr(D_n^{s/m})
\end{equation}
for any $s=1,\dots,m-1$, where the fractional powers are naturally defined. We will do so by showing that the Hamiltonians above  are in involution with respect to both Poisson brackets. In addition to this integrable system there is an additional system, the Boussinesq Lattice, which is also biHamiltonian with respect to our pencil (see \cite{MW}), with $\{,\}_2$-Hamiltonian given by 
\[
\H(D) = \sum_n \ln a^1_n.
\]
The authors of \cite{MW} showed that $\H$ is in the kernel of the Poisson bracket $\{,\}_1-\{,\}_2$.
\begin{proposition}
Let $\F_s$ be defined as in (\ref{Fr}). Then the variational derivative of $\F_s$ after the reduction to $\SL(m)$ is given by the difference operator
\begin{equation}\label{Zs}
Z^s = \frac  sm D^{\frac{s-m}m} + (-1)^m\frac{s}m\tr D^{s/m}
\end{equation}
for any $s$. 
\end{proposition}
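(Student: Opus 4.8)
The plan is to compute the variational derivative of $\F_s$ first on the \emph{unreduced} space of order-$m$ difference operators, where it turns out to equal $\tfrac sm D^{(s-m)/m}$, and then to add the scalar correction forced by the $\SL(m)$ reduction.

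For the unreduced computation I would argue as in the Gelfand-Dickey theory. Write $\F_s(D)=\langle D^{s/m},1\rangle$, with $1$ the identity operator and $\langle V,W\rangle=\sum_n\tr(VW)(n)$ the invariant pairing; recall this pairing is symmetric and satisfies $\langle VW,U\rangle=\langle V,WU\rangle$, so that $\langle AB,C\rangle=\langle BC,A\rangle=\langle CA,B\rangle$. Put $P:=D^{s/m}$, a well-defined invertible pseudodifference operator depending smoothly on $D$ and commuting with $D$, and $X:=P^{\,1-m}$, which commutes with both $D$ and $P$. Differentiating along a curve $D(\epsilon)$ and applying the Leibniz rule to the two sides of $P^m=D^s$, cyclic invariance collapses each of the $m$ summands of $\tfrac{d}{d\epsilon}(P^m)=\sum_{j=0}^{m-1}P^j\dot P P^{m-1-j}$ paired against $X$ to $\langle\dot P,P^{m-1}X\rangle=\langle\dot P,1\rangle$, and each of the $s$ summands of $\tfrac{d}{d\epsilon}(D^s)=\sum_{i=0}^{s-1}D^i\dot D D^{s-1-i}$ paired against $X$ to $\langle\dot D,D^{s-1}X\rangle=\langle\dot D,D^{(s-m)/m}\rangle$. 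Equating the two differentiated sums gives $m\langle\dot P,1\rangle=s\langle\dot D,D^{(s-m)/m}\rangle$, i.e. $\tfrac{d}{d\epsilon}\F_s=\langle\tfrac sm D^{(s-m)/m},\dot D\rangle$, so $\delta_D\F_s=\tfrac sm D^{(s-m)/m}$ before reduction.

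To pass to $\SL(m)$, recall (following \cite{IM}, as recalled above) that the reduction normalizes $a^0=(-1)^{m-1}$ by left multiplication $D\mapsto gD$ by functions and $a^m=-1$ by conjugation $D\mapsto fDf^{-1}$ by functions. Hence the reduced variational derivative $Z^s$ must pair with the reduced tangent space $\{\sum_{k=1}^{m-1}\dot a^k\T^k\}$ exactly as $\tfrac sm D^{(s-m)/m}$ does and, in addition, annihilate the two families of gauge directions $\{\phi D\}$ and $\{[\psi,D]\}$; a short computation with the pairing shows the latter conditions read $\tr(DZ^s)=0$ (from left multiplication) and $\tr(DZ^s)=\tr(Z^sD)$ (from conjugation) pointwise in $n$, hence $\tr(DZ^s)=\tr(Z^sD)=0$. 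I would look for $Z^s$ of the form $\tfrac sm D^{(s-m)/m}+\mu$ with $\mu=\mu_n$ a scalar ($\T^0$) term: for $1\le s\le m-1$ the operator $D^{(s-m)/m}$ has strictly negative order, so adding $\mu$ does not affect the pairing with tangent vectors (which carry no $\T^0$ component), and since $\tfrac sm D^{(s-m)/m}$ commutes with $D$ the conjugation condition is satisfied by it alone, so no $\T^{-m}$-type correction is needed and $\mu$ is fixed solely by $\tr(DZ^s)=0$. Using $D\cdot D^{(s-m)/m}=D^{s/m}$, $\tr(D\mu)=a^0\mu$ and $a^0=(-1)^{m-1}$, this reads $\tfrac sm\tr(D^{s/m})+(-1)^{m-1}\mu=0$, i.e. $\mu=(-1)^m\tfrac sm\tr(D^{s/m})$, which is exactly $Z^s$ of \eqref{Zs}; one checks $\tr(Z^sD)=0$ the same way.

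The genuine content, and the only place requiring real care, is the second step: justifying via the reduction of \cite{IM} that the reduced variational derivative of $\F_s$ differs from the unreduced $\tfrac sm D^{(s-m)/m}$ by precisely the displayed scalar --- that is, that normalizing $a^0$ contributes a $\T^0$ term pinned down by $\tr(DZ^s)=0$ while normalizing $a^m$ contributes nothing because $\tfrac sm D^{(s-m)/m}$ already commutes with $D$. The Gelfand-Dickey trace manipulation of the first step and the one-line determination of $\mu$ are routine, as are the algebraic identities for the fractional powers ($P^m=D^s$, $D^{a/m}D^{b/m}=D^{(a+b)/m}$, commutativity with $D$) used along the way.
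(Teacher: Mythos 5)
Your proposal is correct and arrives at the right operator, but it is organized differently from the paper's argument. The paper never separates ``unreduced gradient'' from ``gauge correction'': it differentiates the normalized family $\hat L(\epsilon)=a^{-1}b^{-1}(L+\epsilon V)b$ directly through the relation $(\hat L^{1/m})^m=\hat L$, obtaining $\sum_i D^{i/m}Z(1,V)D^{(m-1-i)/m}=V-a'D-b'D+Db'$, and then takes the trace against $D^{(s-1)/m}$; the scalar term $(-1)^m\tfrac sm\tr D^{s/m}$ appears explicitly from $a'=(-1)^{m-1}\tr V$, while the $b'$ terms cancel by cyclicity. You instead compute the unreduced gradient $\tfrac sm D^{(s-m)/m}$ by the standard Gelfand--Dickey trace identity and then characterize the reduced gradient as the unique functional agreeing with it on slice directions $\sum_{k=1}^{m-1}\dot a^k\T^k$ and annihilating the gauge directions $\phi D$ and $[\psi,D]$; this cleanly isolates \emph{why} the correction is a pure $\T^0$ term (the conjugation condition is already satisfied because $D^{(s-m)/m}$ commutes with $D$, so only the left-multiplication condition $\tr(DZ^s)=0$ is active) --- a point the paper's computation establishes only implicitly through the cancellation of the $b'$ terms. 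The two computations rest on the same facts (cyclicity of $\sum_n\tr$, commutativity of $D$ with its fractional powers, and the variation of the $a^0$-normalizer), so your route buys conceptual transparency at the cost of having to justify that the tangent space splits as slice plus gauge directions, which, as you note, is the one step requiring care and hinges on $(N,m)=1$ exactly as in the paper's claim that $b^{-1}b_m=-1/a^m$ is uniquely solvable. Two small remarks: your restriction to $1\le s\le m-1$ when discussing the order of $D^{(s-m)/m}$ is unnecessary --- the pairing of a $\T^0$ scalar against $\sum_{k=1}^{m-1}\dot a^k\T^k$ vanishes for every $s$, which is what the proposition's ``for any $s$'' requires --- and your gauge-orthogonality conditions should be stated as annihilating the span $\{\phi D+[\psi,D]\}$, which is equivalent to the two separate conditions you list.
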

\begin{proof} First of all we will investigate the effect of the reduction by left and right multiplication on $L$ on the fractional power. Denote by $\hat L(\epsilon)  = a^{-1} b^{-1} (L+\epsilon V) b$ the reduced operator, and define $a' = \frac d{d\epsilon}|_{\epsilon=0} a$ with $a^0 = (-1)^{m-1}, a^m = -1$. Since $a^0(\epsilon) = \tr(L+\epsilon V)$ and $a^m = \tr(\T^{-m} (L+\epsilon V))$, we have that
\[
a' = (-1)^{m-1} (a^0)' = (-1)^{m-1}\tr V.
\]
 Define 
\[
Z(1,V) = ((\hat L(\epsilon))^{1/m})'|_D.
\]
Using $(\hat L^{1/m})^m = \hat L = a^{-1} b^{-1} (L+\epsilon V) b$, we conclude that, if we differentiate and evaluate at $D$, we obtain
\[
\sum_{s=0}^{m-1} D^{s/m} Z(1,V) D^{\frac{m-1-s}m}=V -a' D-b' L + L b'
\]
and from here
\[
\sum_{i=0}^{m-1} D^{i/m} Z(1,V) D^{-i/m}=(V -a' D-b' D + D b') D^{\frac{1-m}m}.
\]
Applying the trace 
\[
m\tr(Z(1,V)) = \tr(VD^{\frac{1-m}m}) +(-1)^m \tr V \tr D^{1/m} - \tr( b' D^{1/m} - D b' D^{\frac{1-m}m}) 
\]
\[= \tr( V\left(D^{\frac{1-m}m} + (-1)^m \tr D^{1/m}\right))
\]
and therefore $Z^1 = \frac  1 m \left(D^{\frac{1-m}m}+ (-1)^m \tr D^{1/m}\right)$. 

Likewise, if \[Z(s,V) = ((\hat L(\epsilon))^{s/m})'|_D\] we have that
\[
Z(s,V) =\sum_{i=0}^{s-1} D^{i/m} Z(1,V) D^{\frac{s-i-1}m} = \sum_{i=0}^{s-1} D^{i/m} \left[Z(1,V) D^{\frac{s-1}m}\right]D^{-i/m}.
\]
We also know that 
\[
\sum_{i=0}^{m-1} D^{i/m} Z(1,V)D^{\frac{s-1}m} D^{-i/m}\]
\[
=(V -a' D-b' D + D b') D^{\frac{1-m}m}D^{\frac{s-1}m} = (V -a' D-b' D + D b') D^{\frac{s-m}m}
\]
and from here
\[
m \tr(Z(1,V)D^{\frac{s-1}m}) = \tr (VD^{\frac{s-m}m} - a'D^{s/m})
\]
Putting everything together
\[
\tr(Z(s,V)) = s \tr(Z(1,V) D^{\frac{s-1}m}) = \frac sm \tr(V D^{\frac{s-m}m}- (-1)^{m-1}\tr V D^{s/m})  
\]
\[
= \frac sm \tr(V(D^{\frac{s-m}m} + (-1)^m \tr D^{s/m})
\]
and so $Z^s = \frac sm(D^{\frac{s-m}m}+ (-1)^{m}\tr D^{s/m})$, as stated.
\end{proof}
\begin{theorem}
The family $\{\F_s\}_{s=1}^\infty$ commute with respect to both (\ref{firstB}) and (\ref{linear}), and so they generate an integrable system hierarchy. Furthermore, the kernel of $\omega_2$ is at least $m-1$ dimensional.
\end{theorem}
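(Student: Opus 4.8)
The plan is to follow the classical Gelfand--Dickey argument, transplanted to the difference-operator setting. The key point is that the two brackets $\{\,,\}_1$ and $\{\,,\}_2$ arise as the coefficients of a pencil obtained by pushing forward a single quadratic bracket via $\phi_\lambda$, so it suffices to show that the family $\{\F_s\}$ is in involution with respect to the $\lambda$-pencil $\{\,,\}_\lambda$ for all $\lambda$; equivalently, that each $\F_s$ Poisson-commutes with all $\F_t$ with respect to the single quadratic bracket (\ref{firstB}) \emph{together with} its companion. First I would record, from the Proposition just proved, that $\delta_D\F_s = Z^s = \tfrac sm D^{(s-m)/m} + (-1)^m\tfrac sm\tr D^{s/m}$, so that $D\,\delta_D\F_s$ and $\delta_D\F_s\,D$ are both, up to the scalar central term which contributes nothing to a bracket of the form (\ref{firstB}) because $\tr([r(\cdot),\,\cdot\,])$ kills constants, equal to $\tfrac sm D^{s/m}$ — a power of $D$ itself. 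This is the discrete analogue of the fact that $L\cdot\delta L^{s/m}\sim L^{s/m}$.

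The heart of the argument is then the following computation. Using (\ref{firstB}) with $r^+(L)=L_++\tfrac12 L_0$, and writing $P = D^{s/m}$, $W = \delta_D\F_t$, one gets
\[
\{\F_s,\F_t\}_1(D) = \tfrac sm\sum_n \tr\!\left(\bigl(r^+(P)\,D - D\,r^+(P)\bigr)\,W\right)(n)
= \tfrac sm\sum_n \tr\!\left([P_+,D]\,W\right)(n),
\]
where the last equality uses $[P,D]=0$ (since $P=D^{s/m}$ commutes with $D$) to replace $[r^+(P),D]$ by $[P_+,D] = -[P_-,D]$, the zero-order term $P_0$ contributing a term $[P_0,D]$ whose trace against $W$ I would check separately vanishes after summation. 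Now $[P_+,D]$ is a \emph{difference} operator (the purely-$\Psi$DO tails of $[P_+,D]$ and $[P_-,D]$ cancel since their difference is $[P,D]=0$), so $\tr([P_+,D]\,W)$ only sees the difference-operator part of $W$; and $W = \delta_D\F_t$ pairs, under $\langle\,,\rangle$, with this difference part exactly as $D^{(t-m)/m}$ does. Iterating — expressing the bracket symmetrically in $s$ and $t$ and using cyclicity of $\sum_n\tr$ together with $[D^{s/m},D^{t/m}]=0$ — the expression collapses to something manifestly antisymmetric in $(s,t)$ and at the same time equal to its own negative, forcing it to vanish. The companion bracket $\{\,,\}_2$ is handled identically: from (\ref{secondB}), $\{\F_s,\F_t\}_2(D) = (-1)^{m-1}\sum_n\tr([D_+,(Z^s)_-]_+\,Z^t)$, and again $(Z^s)_- = \tfrac sm (D^{(s-m)/m})_-$ while $[D_+, (D^{(s-m)/m})_-] = [D, D^{(s-m)/m}] - [D_-,\cdot] - \cdots$ is, modulo commuting terms, a difference operator symmetric in the two indices; the same cyclicity-plus-antisymmetry collapse applies. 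This is the main obstacle: keeping precise track of which pieces are genuine difference operators versus $\Psi$DO tails, and verifying that every ``error'' term (the scalar central corrections $(-1)^m\tr D^{\cdot/m}$, the zero-order parts $P_0$, the $\beta^r$-corrections in the reduced $\delta_D\F$) drops out under $\sum_n\tr$. Everything else is the bookkeeping version of the Gelfand--Dickey involutivity lemma.

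Having established involutivity, the integrability claim in the Liouville sense follows formally: on the $\SL(m)$ reduction the symplectic leaves of $\{\,,\}_1$ have a well-defined dimension, and the $\F_s$ together with their translates along the $\Z/N$ action provide the requisite number of independent commuting integrals — independence coming from the fact that the symbols $D^{s/m}$ have distinct leading orders $(s-m)/m$, so the $Z^s$ are linearly independent as covectors on a dense open set. For the final sentence, that $\ker\omega_2$ is at least $(m-1)$-dimensional: by the preceding theorem $\{\F,\G\}_2(D_n) = (-1)^{m-1}\omega_2(X^g,X^f)$, so a Hamiltonian $\F$ lies in the kernel of $\{\,,\}_2$ precisely when $X^\F$ lies in the kernel of $\omega_2$. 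I would argue that the $\F_s$ span an $(m-1)$-dimensional space of such Casimirs of $\{\,,\}_2$: indeed the companion bracket $\{\,,\}_2$ is the $\lambda$-coefficient of the pencil, so $\{\F_s,\G\}_2$ measures the failure of $\{\F_s,\G\}_\lambda$ to be $\lambda$-independent; since $\F_s$ commutes with \emph{everything} in the full quadratic bracket (this is exactly what the continuous-case argument, and the biHamiltonian ``recursion'' $\{\F_{s},\cdot\}_2 = \{\F_{s+m},\cdot\}_1$-type identity obtained from $\delta_D\F_s \sim D^{(s-m)/m}$, gives), the $Z^s = \delta_D\F_s$ for $s=1,\dots,m-1$ furnish $m-1$ independent null directions $X^{\F_s}$ of $\omega_2$, and their independence is the same leading-order argument as above. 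The expected obstacle here is making ``$\F_s$ is a Casimir of $\{\,,\}_2$'' precise rather than merely ``$\F_s$ commutes with the other $\F_t$'' — this needs the stronger statement that $\delta_D\F_s\,D$ is a pure power of $D$, which the Proposition supplies, combined with the structure of (\ref{secondB}).
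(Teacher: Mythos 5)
Your proposal is correct and follows essentially the same route as the paper: identify $\delta_D\F_s$ with $\tfrac{s}{m}D^{(s-m)/m}$ up to a scalar trace term, observe that $[ (Z^s)_-,D_+]_+=[D^{(s-m)/m},D]_+=0$ so that each $\F_s$ is actually a Casimir of $\{\,,\}_2$ (which yields both the involution and the $(m-1)$-dimensionality of $\ker\omega_2$ in one stroke, as you note at the end), and kill $\{\F_s,\F_t\}_1$ by cyclicity of $\sum_n\tr$ against $[D,D^{(t-m)/m}]=0$. The only difference is cosmetic: the paper closes the $\{\,,\}_1$ computation with the direct identities $\tr([D_+^{s/m},D]\,D^{(t-m)/m})=\tr(D_+^{s/m}[D,D^{(t-m)/m}])=0$ rather than your symmetry/antisymmetry collapse, which is an unnecessary detour given that you already have all the ingredients for the one-line cyclicity argument.
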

\begin{proof}
The proof that they commute with respect to (\ref{linear}) is straightforward substituting $Z^s = \deltaF$ and $Z^p = \deltaG$  in (\ref{linear}) and observing that they both vanish when the $\T^0$ term in $D_n$ is constant and independent from $n$. Indeed $[Z^s_-, D_+]_+ = [Z^s_-, D]_+ = [D^\frac{s-m}m, D]_+ = 0$, and so $\F_s$ is in the kernel of (\ref{linear}) for all $s$, proving that the dimension of the kernel of $\omega_2$ is at least $m-1$. 

We also have that 
\[
\{\F_p,\F_s\}_1(D)  =\frac{p}{m} \langle r(D Z^p) D- D r(Z^pD), Z^s\rangle \]
\[
 = \frac{p}{2m}\langle D^{p/m}_+D+(-1)^m D_+\tr(D^{p/m})D- D^{p/m}_-D-DD_+^{p/m}+(-1)^{m-1}D\tr(D^{p/m}) D_++D D^{p/m}_-, Z^s\rangle
\]
\[
=\frac{p}{2m}\langle [D_+^{p/m}, D]+[\tr(D^{p/m}, D] + [D, D_-^{p/m}], Z^s\rangle = \frac{p}{m}\langle [D_+^{p/m}, D]+[\tr(D^{p/m}), D], Z^s\rangle
\]
where we have used that $[D^{p/m}, D] = 0$. 

Substituting $Z^s$ and noticing that the zero order term in $[D_+^{p/m}, D]+[\tr(D^{p/m}), D]$ vanishes when $a^0$ is constant, we get
\[
\{\F_p,\F_s\}_1(D) = \frac{ps}{m^2} \langle  [D_+^{p/m}, D]+[\tr(D^{p/m}), D], D^{\frac{s-m}m}\rangle. 
\]
We also have 
\[
 \tr([\tr(D^{p/m}), D] D^{\frac{s-m}m}) = \tr(\tr(D^{p/m} )[D, D^{\frac{s-m}m}])  = 0,
 \]
 \[
\tr([D_+^{p/m}, D]D^{\frac{s-m}m}) = \tr(D_+^{p/m}[D, D^{\frac{s-m}m}]) = 0,
 \]
and therefore, $\{\F_p, \F_s\}(D) = 0$ for any $p,s$.

\end{proof}
The existence of this hierarchy was conjectured in \cite{CM}  as linked to the two elements in the kernel of $\omega_2$ described in that paper (the statement in that paper is not correct, as the dimension is higher than $2$. A brief correction is forthcoming). Indeed, the two vector fields in that paper, $X^1$ and $X^2$ coincide with $X^{\F_1}$ and $X^{-\F_2}$, as shown next.

\begin{theorem} Let $X^f$ be the polygonal invariant vector field inducing the $f$-Hamiltonian on $D$, where $D(\gamma) =0$. Then $X^{\F_s}$ is defined by the nonnegative part of $D^{s/m}$
\[
X^{\F_s} = \frac sm(D^{s/m} - D_-^{s/m})(\gamma) = \frac sm(D^{s/m}_+ + \tr(D^{s/m}))(\gamma).
\]
\end{theorem}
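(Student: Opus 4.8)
The plan is to identify the polygonal vector field $X^{\F_s}$ directly from the $W_m$-picture developed in Section~4, rather than via the pre-symplectic form. Recall from that section that when $D$ evolves Hamiltonianly with respect to $\{,\}_1$ with Hamiltonian $\F$, the induced polygonal field is $Y^\F = r(\deltaF\,D)(\gamma)$, and that $X^f = Y^\F$ along $\gamma$ whenever $\F(D) = f(\mathbf a)$. So it suffices to compute $r(Z^s D)(\gamma)$, where $Z^s = \delta_D\F_s = \frac sm D^{(s-m)/m} + (-1)^m\frac sm\tr D^{s/m}$ by the Proposition just proved. First I would write $Z^s D = \frac sm D^{s/m} + (-1)^m\frac sm\tr(D^{s/m})\,D$, using $D^{(s-m)/m}D = D^{s/m}$. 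Applying $r(L) = \tfrac12(L_+ - L_-)$ and recalling the freedom to replace $r$ by $r^+(L) = L_+ + \tfrac12 L_0$ (noted after \eqref{firstB}), we get $r^+(Z^s D) = \frac sm\big( (D^{s/m})_+ + \tfrac12(D^{s/m})_0\big) + (-1)^m\frac sm\tr(D^{s/m})\,r^+(D)$.

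The key simplification comes from acting on $\gamma$. Since $D(\gamma) = 0$, any multiple of $D$ on the right kills $\gamma$: in particular $r^+(D)(\gamma)$ must be handled carefully because $r^+(D) = D_+ + \tfrac12 D_0 = D - D_- + \tfrac12 D_0$, so $r^+(D)(\gamma) = -(D_- )(\gamma) + \tfrac12 D_0(\gamma)$. In the $\SL(m)$ case $D$ is monic of the form $\sum_{k=0}^{m-1}a^k\T^k - \T^m$ with $D_- = 0$ (there are no negative-power terms) and $D_0 = a^0 = (-1)^{m-1}$ constant, so actually $D_+ = D - D_0$ and the term $(-1)^m\frac sm\tr(D^{s/m})\,r^+(D)(\gamma)$ contributes $(-1)^m\frac sm\tr(D^{s/m})\big(D(\gamma) - \tfrac12 D_0(\gamma)\big) = (-1)^m\frac sm\tr(D^{s/m})\big(-\tfrac12(-1)^{m-1}\big)\gamma = \frac s{2m}\tr(D^{s/m})\gamma$. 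Meanwhile $(D^{s/m})_0$ is a scalar sequence; the natural claim to verify is that $(D^{s/m})_0 = \tr(D^{s/m})$ by definition of the trace functional $\tr(V) = v^0$, so the $\tfrac12(D^{s/m})_0$ term combines with the above to reconstitute $\tr(D^{s/m})$ with the stated coefficient. Collecting everything, $r^+(Z^s D)(\gamma) = \frac sm\big((D^{s/m})_+ + \tr(D^{s/m})\big)(\gamma)$, which is the asserted formula; the alternative form $\frac sm(D^{s/m} - D_-^{s/m})(\gamma)$ follows since $D^{s/m} = (D^{s/m})_+ + (D^{s/m})_0 + (D^{s/m})_-$ and $(D^{s/m})_-(\gamma) $ together with the $(D^{s/m})_0$ bookkeeping gives the same thing once one is careful that $D^{s/m}_-$ here denotes the strictly-negative part.

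The main obstacle I anticipate is precisely the bookkeeping of the zero-order term and the negative part of the fractional power $D^{s/m}$ when acting on $\gamma$: one must be scrupulous about which part of $D^{s/m}$ is "nonnegative," how $\tr(D^{s/m})$ relates to $(D^{s/m})_0$, and that $D_-^{s/m}(\gamma)$ is not zero (unlike $D(\gamma)$). A secondary point requiring care is that the identity $X^f = Y^\F$ from Section~4 was stated for $\F(D) = f(\mathbf a)$ with $D$ as in \eqref{Da}; here $\F_s$ is genuinely a function of $D$, so I would first check that $\F_s$ descends to a function of the invariants $\mathbf a$ after the $\SL(m)$ reduction (it does, being built from traces of powers of $D$, which are gauge-invariant), and that the reduced variational derivative is the $Z^s$ computed in the Proposition — both already in place. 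Once the action on $\gamma$ is correctly organized, the rest is the short algebraic manipulation sketched above.
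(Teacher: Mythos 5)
Your proposal is correct and follows essentially the same route as the paper: substitute the variational derivative $Z^s$ into the $\{,\}_1$-Hamiltonian flow of $D$ and transfer the evolution to $\gamma$ via $D(\gamma)=0$; your use of the Section~4 formula $Y^{\F}=r(\deltaF\, D)(\gamma)$ merely packages the paper's explicit computation of $D_t$ more compactly. (One typographical slip: $r^+(D)=D-D_--\tfrac12 D_0$, not $D-D_-+\tfrac12 D_0$, though your subsequent evaluation uses the correct expression $D-\tfrac12D_0$ and the bookkeeping $(D^{s/m})_0=\tr(D^{s/m})$ comes out right.)
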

\begin{proof}
Give the Hamiltonian $\F_s$ as in (\ref{Fr}) and its variation (\ref{Zs}), its Hamiltonian evolution with respect to (\ref{firstB}) is given by
\[
D_t = \frac 12\left([(DZ^s)_+-(D Z^s)_-]D
 -D[(Z^sD)_+-(Z^sD)_-] \right).
 \]
 Upon substituting (\ref{Zs}) we obtain
 \[
 D_t =  \frac s{2m}\left(D^{s/m}_+D - DD^{s/m}_++D D^{s/m}_-- D^{s/m}_-D +(-1)^m(D_+D_0^{s/m} D- D D_0^{s/m}D_+)\right).
 \]
 Next we observe that $[D, D^{s/m}] = 0$ and so $[D, D^{s/m}_-] = -[D, D^{s/m}_++D^{s/m}_0]$. We also note that
 \begin{eqnarray*}
  D_+D_0^{s/m} D- D D_0^{s/m}D_+&=& DD_0^{s/m} D- D D_0^{s/m}D - D_0D_0^{s/m} D+ D D_0^{s/m}D_0
  \\
  &=& (-1)^{m-1}( D D_0^{s/m}- D_0^{s/m} D).
  \end{eqnarray*}
  Substituting these above we obtain
  \[
  D_t = \frac sm\left(D_+^{s/m} D +D_0^{s/m}) D - D( D_+^{s/m} + D_0^{s/m})\right).
  \]
  Finally, $D(\gamma) = 0$ implies that \[D(\gamma_t) = -D_t (\gamma) = \frac sm D( D_+^{s/m} + D_0^{s/m})(\gamma).\]
  From here, the theorem follows as the kernel of $D$ does not include any invariant vector field.
\end{proof}

%

\end{document}